\newcommand{\bega}{\begin{eqnarray}}
\newcommand{\ega}{\end{eqnarray}}
\newcommand{\bb}{\begin{equation}}
\newcommand{\ee}{\end{equation}}
\newtheorem{te}{Theorem}
\newtheorem{lema}{Lemma}
\begin{document}
\title{Error Correction Capability of Column-Weight-Three LDPC Codes: Part II}
\author{Shashi~Kiran~Chilappagari,~\IEEEmembership{Student~Member,~IEEE,}~Dung~Viet~Nguyen,~\IEEEmembership{Student~Member,~IEEE,}~Bane~Vasic,~\IEEEmembership{Senior Member,~IEEE,}~and~Michael~W.~Marcellin,~\IEEEmembership{Fellow,~IEEE}% <-this % stops a space
\thanks{Manuscript received \today. This work is funded by NSF under Grant CCF-0634969, ECCS-0725405, ITR-0325979 and by the INSIC-EHDR program.}
\thanks{S. K. Chilappagari, D. V. Nguyen, B. Vasic and M. W. Marcellin are with the Department of Electrical and Computer Engineering, University of Arizona, Tucson, Arizona, 85721 USA. (emails: \{shashic, nguyendv, vasic, marcellin\}@ece.arizona.edu.}
}% <-this % stops a space
\markboth{Submitted to  IEEE Transactions on Information Theory, July 2008}%
{Submitted to IEEE Transactions on Information Theory, July 2008}
\maketitle
%\vspace{-0.5in}
\begin{abstract}
\noindent The relation between the girth and the error correction capability of column-weight-three LDPC codes is investigated. Specifically, it is shown that the Gallager A algorithm can correct $g/2-1$ errors in $g/2$ iterations on a Tanner graph of girth $g \geq 10$.
\end{abstract}
\textbf{\small Keywords:} {\small Low-density parity-check codes, Gallager A algorithm, error floor, girth}
\section{Introduction}
Iterative message passing algorithms for decoding low-density parity-check (LDPC) codes \cite{gallager} operate by passing messages along the edges of a graphical representation of the code known as the Tanner graph \cite{tanner}. These algorithms are optimal when the underlying graph is a tree (see \cite{richardsonurbanke},\cite{richardsonurbankeshokrollahi} for general theory of LDPC codes), but in the presence of cycles, the decoding becomes sub-optimal and there exist low-weight patterns known as near codewords \cite{weaknessmackay} or trapping sets \cite{rich} uncorrectable by the decoder. It is now well established that the trapping sets lead to error floor in the high signal-to-noise (SNR) region (see \cite{breakingtrappingsets} for a list of references). While it is generally known that high girth codes have better performance in the error floor region, the exact relation between the girth and the slope of the frame error rate (FER) curve in the error floor region is unknown. 

In this paper, we consider transmission over the binary symmetric channel (BSC) and the corresponding hard decision decoding algorithms. We focus on column-weight-three codes which are of special importance as their decoders have very low complexity and are interesting for a wide range of applications. We show that a column-weight-three LDPC code with Tanner graph of girth $g \geq 10$ can correct all error patterns of weight $g/2-1$ or less under the Gallager A algorithm, thereby showing that there are no trapping sets with critical number less than $g/2$ (see \cite{colwtthreepaper} pp. 4-6, for missing definitions).  In \cite{breakingtrappingsets}, we showed that the slope of the FER curve in the error floor region is determined by the minimal critical number. In \cite{colwtthreepaper}, we proved that a column-weight-three LDPC code with Tanner graph of girth $g \geq 10$ always has a trapping set of size $g/2$ with critical number $g/2$ for the Gallager A algorithm. It is worth nothing that for codes with Tanner graphs of girth $g=6$ and $g=8$, girth alone cannot guarantee correction of  two and three errors, respectively. It can be easily shown that codes with girth six Tanner graphs can correct two errors if and only if the Tanner graph does not contain a codeword of weight four. For codes with Tanner graphs of girth eight, we established necessary and sufficient conditions to correct three errors \cite{itwpaper}. Thus, with the results presented here, the problem of determining the slope of the FER curve in the error floor region for column-weigh-three codes under the Gallager A algorithm is now completely solved.

The rest of the paper is organized as follows. In Section \ref{prelims}, we establish the notation, define the Gallager A algorithm and analyze the algorithm for the first $k$ ($g/4-1 \leq k < g/4$) iterations. In Section \ref{maintheorem}, we prove our main result and we conclude in Section \ref{discussion} with a few remarks.
\section{Preliminaries}\label{prelims}
\subsection{Notation}
Let $\mathcal{C}$ be an LDPC code with column weight three and length $n$. The Tanner graph $G$ of $\mathcal{C}$ consists of two sets of nodes: the set of variable nodes $V$ with $|V|=n$ and the set of check nodes $C$. The check nodes (variable nodes) connected to a variable node (check node) are referred to as its neighbors. An edge $e$ is an unordered pair $\{v,c\}$ of a variable node $v$ and a check node $c$ and is said to be incident on $v$ and $c$. A directed edge $\vec{e}$ is an ordered pair $(v,c)$ or $(c,v)$ corresponding to the edge $e=\{v,c\}$. With a moderate abuse of notation, we denote directed edges by simple letters (without arrows) but specify the direction. The girth $g$ is the length of the shortest cycle in $G$. For a given node $u$, the neighborhood of depth $d$, denoted by $\mathcal{N}^d_u$, is the induced subgraph consisting of all nodes reached and edges traversed by paths of length at most $d$ starting from $u$ (including $u$). The directed neighborhood of depth $d$ of a directed edge $e=(v,c)$ denoted by $\mathcal{N}^d_{e}$, is defined as the induced subgraph containing all edges and nodes on all paths $e_1,\ldots,e_d$ starting from $v$ such that $e_1\neq e$ (see \cite{richardsonurbanke} for definitions and notation). In a Tanner graph with girth $g$, we note that $\mathcal{N}^{t}_u$ is a tree when $t \leq g/2-1$. Also, if $e_1=(v,c)$ and $e_2=(c,v)$, then $\mathcal{N}^i_{e_1} \cap \mathcal{N}^j_{e_2} = \phi$ for $i+j<g-1$. Let $k$ denote the number of independent iterations as defined in \cite{gallager}. The original value of a variable node is its value in the transmitted codeword. We say a variable node is good if its received value is equal to its original value and bad otherwise. A message is said to be correct if it is equal to the original value of the corresponding variable node and incorrect otherwise. In this paper, $\circ$ denotes a good variable node, $\bullet$ denotes a bad variable node and $\square$ denotes a check node. For output symmetric channels (see \cite{richardsonurbanke}), without loss of generality, we can assume that the all zero codeword is transmitted. We make this assumption throughout the paper. Hence, a bad variable node has received value $1$ and an incorrect message has a value of $1$. A configuration of bad variable nodes is a subgraph in which the location of bad variables relative to each other is specified. A valid configuration $\mathcal{C}_g$ is a configuration of at most $g/2-1$ bad variable nodes free of cycles of length less than $g$. The set of bad variable nodes in $\mathcal{N}^d_e$ is denoted by $\mathcal{B}(\mathcal{N}^d_e)$ and $|\mathcal{B}(\mathcal{N}^d_e)|$ is denoted by $B(\mathcal{N}^d_e)$. The number of bad variable nodes at depth $d$ in $\mathcal{N}^d_e$ is denoted by $b_e^d$.
\subsection{Hard Decision Decoding Algorithms}
Gallager in \cite{gallager} proposed two simple binary message passing algorithms for decoding over the BSC; Gallager A and Gallager B. See \cite{shokrollahi} for a detailed description of the Gallager B algorithm. For column-weight-three codes, which are the main focus of this paper, these two algorithms are the same. Every round of message passing (iteration) starts with sending messages from variable nodes to check nodes (first half of the iteration) and ends by sending messages from check nodes to variable nodes (second half of the iteration). Let $\mathbf{r}$, a binary $n$-tuple be the input to the decoder. Let $\omega_j(v,c)$ denote the message passed by a variable node $v$ to its neighboring check node $c$ in $j^{th}$ iteration and $\varpi_j(c,v)$ denote the message passed by a check node $c$ to its neighboring variable node $v$. Additionally, let $\omega_j(v,\colon)$ denote the set of all messages from $v$, $\omega_j(v,\colon \backslash c)$ denote the set of messages from $v$ to all its neighbors except to $c$ and $\omega_j(\colon,c)$ denote the set of all messages to $c$. Let $|\varpi(:,v)=m|$ denote the number of incoming messages to $v$ which are equal to $m \in \{0,1\}$. The terms $\omega_j(\colon \backslash v,c),\varpi_j(c,\colon),\varpi_j(c,\colon \backslash v), \varpi_j(\colon,v)$ and $\varpi_j(\colon \backslash c,v)$ are defined similarly. The Gallager A algorithm can then be defined as follows.
\begin{eqnarray}
\omega_1(v,c)&=&\mathbf{r}(v) \nonumber \\ 
\omega_j(v,c) &=& \left \{ \begin{array}{cl}
												1, & \mbox{if } \varpi_{j-1}(\colon \backslash c,v)=\{1\}\\
											0, & \mbox{if } \varpi_{j-1}(\colon \backslash c,v)=\{0\}\\
											\mathbf{r}(v), & \mbox{otherwise}  \end{array}\right. \nonumber \\
\varpi_j(c,v)&=& \left(\sum \omega_j(\colon \backslash v,c)\right) \mbox{mod } 2 \nonumber
\end{eqnarray}

At the end of each iteration, an estimate of each variable node is made based on the incoming messages and possibly the received value. In this paper, we assume that the estimate of a variable node is taken to be the majority of the incoming messages (see \cite{colwtthreepaper} for details).  The decoded word at the end of the $j^{th}$ iteration is denoted as $\mathbf{r}^{(j)}$. The decoder is run  until a valid codeword is found or a maximum number of iterations $M$ is reached, whichever is earlier. The output of the decoder is either a codeword or $\mathbf{r}^{(M)}$.
\begin{figure*}
\centering
\subfigure[] % caption for subfigure a
{
    \label{iteration2}

\includegraphics[height=2.4in]{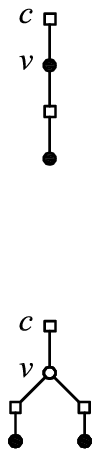}
}
\subfigure[] % caption for subfigure a
{
    \label{iteration3}

\includegraphics[height=2.4in]{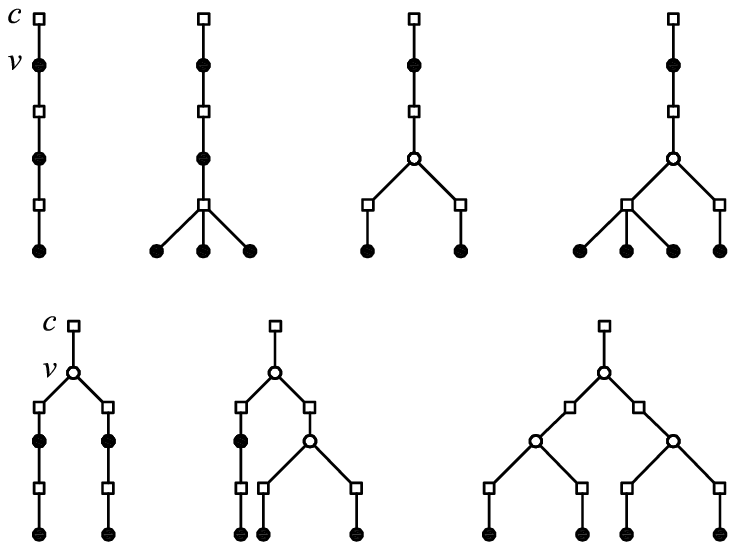}
}
\subfigure[] % caption for subfigure a
{
    \label{iteration4}

\includegraphics[height=3.5in]{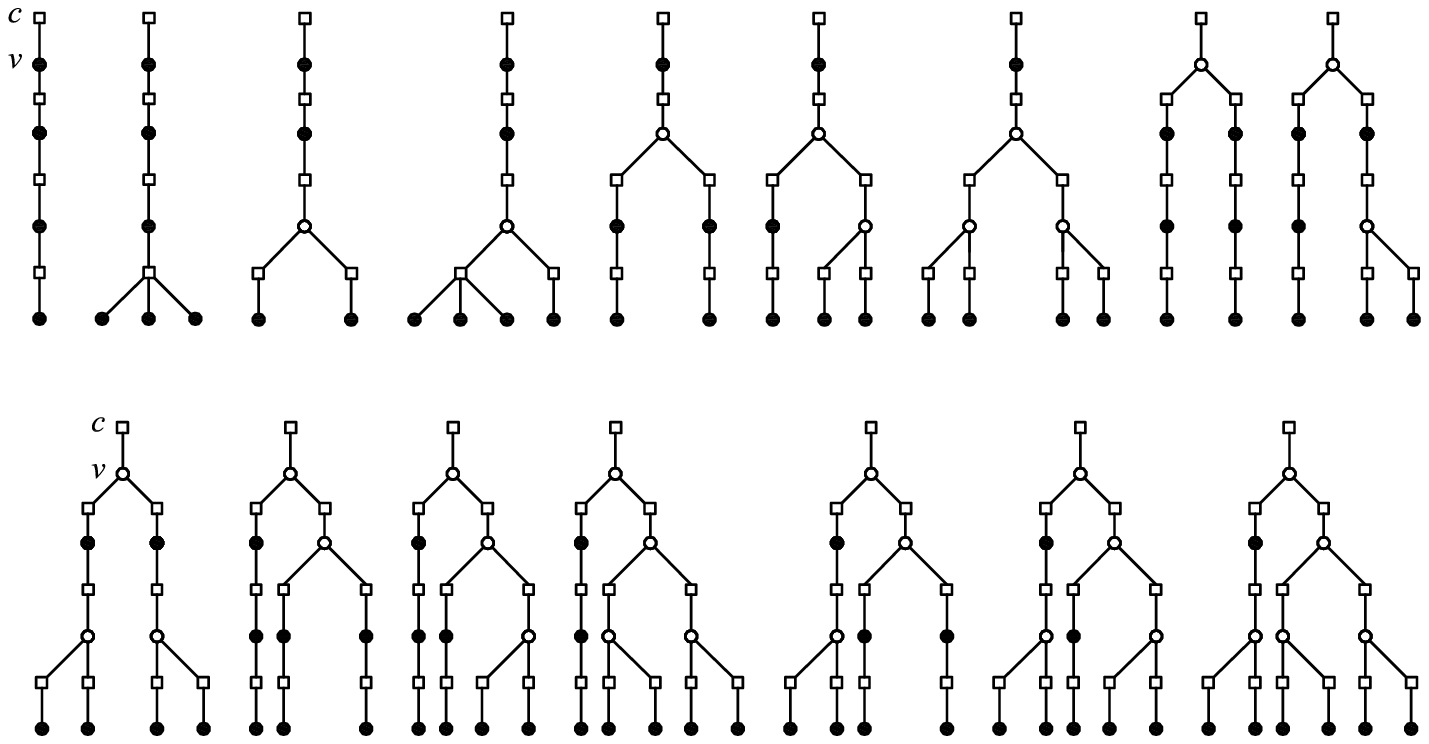}
}
\caption{Possible configurations of at most $2k+1$ bad variable nodes in the neighborhood of a variable node $v$ sending an incorrect message to check node $c$ in the $(k+1)^{th}$ iteration for \subref{iteration2} $k=1$, \subref{iteration3} $k=2$ and \subref{iteration4} $k=3$. }
\label{iterations234}
\end{figure*}

\subsection{The first $k$ iterations}\label{kiterations}
We begin with a lemma describing the messages passed by the Gallager A algorithm in a column-weight-three code.
\begin{lema}

(i) If $v$ is a bad variable node, then we have $\omega_1(v,:)=\{1\}$ and
\begin{itemize} 
\item $\omega_j(v,:)=\{1\}$ if $|\varpi_{j-1}(:,v)=1| \geq 2$, i.e., $v$ sends incorrect messages to all its neighbors if it receives two or more incorrect messages from its neighboring checks in the previous iteration.
\item $\omega_j(v,:\backslash c)=\{1\}$ and $\omega_j(v,c)=0$ if $\varpi_{j-1}(:\backslash c,v)=\{0\}$ and $\varpi_{j-1}(c,v)=1$, i.e., $v$ sends one correct message and two incorrect messages if it receives one incorrect message from its neighboring checks in the previous iteration. The correct message is sent along the edge on which the incorrect message is received. 
\item $\omega_j(v,:)=\{0\}$ if $\varpi_{j-1}(:,v)=\{0\}$, i.e., $v$ sends all correct messages if it receives all correct messages from its neighboring checks in the previous iteration.
\end{itemize}

(ii) If $v$ is a good variable node, then we have $\omega_1(v,:)=\{0\}$ and
\begin{itemize} 
\item $\omega_j(v,:)=\{0\}$ if $|\varpi_{j-1}(:,v)=0| \geq 2$, i.e., $v$ sends all correct messages  if it receives two or more  correct messages from its neighboring checks in the previous iteration.
\item $\omega_j(v,:\backslash c)=\{0\}$ and $\omega_j(v,c)=1$ if $\varpi_{j-1}(:\backslash c,v)=\{1\}$ and $\varpi_{j-1}(c,v)=0$, i.e., $v$ sends one incorrect message and two correct messages if it receives two incorrect messages from its neighboring checks in the previous iteration. The incorrect message is sent along the edge on which the correct message is received. 
\item $\omega_j(v,:)=\{1\}$, if $\varpi_{j-1}(:,v)=\{1\}$, i.e., $v$ sends all incorrect messages  if it receives all incorrect messages from its neighboring checks in the previous iteration.
\end{itemize}

(iii) For a check node $c$, we have,
\begin{itemize}
\item $\varpi_{j}(c,v)=\omega_j(v,c)\oplus 1$, if $|\omega_j(:,c)=1|$ is odd, i.e., $c$ sends incorrect messages along the edges on which it received correct messages and correct messages along the edges on which it received incorrect messages, if the total number of incoming incorrect messages from its neighboring variable nodes is odd.
\item $\varpi_{j}(c,v)=\omega_j(v,c)$, if $|\omega_j(:,c)=1|$ is even, i.e., $c$ sends incorrect messages along the edges on which it received incorrect messages and correct messages along the edges on which it received correct messages, if the total number of incoming incorrect messages from its neighboring variable nodes is even.
\end{itemize}

(iv) A variable node is estimated incorrectly at the end of an iteration if it receives at least two incorrect messages.
\end{lema}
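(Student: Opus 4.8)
The plan is to verify each of the four claims by direct substitution into the update equations of the Gallager A algorithm, exploiting two facts: that the all-zero codeword is transmitted (so the correct message value is $0$, an incorrect message value is $1$, $\mathbf{r}(v)=0$ for a good node and $\mathbf{r}(v)=1$ for a bad node) and that the column weight is three (so every variable node receives exactly three messages and, for each outgoing edge $(v,c)$, the set $\varpi_{j-1}(:\backslash c,v)$ has exactly two elements). The whole argument is a finite case analysis; the only point requiring care is the third branch of the variable-node rule, where the outgoing message equals $\mathbf{r}(v)$ rather than a fixed constant.

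For part (i), let $v$ be bad, so $\mathbf{r}(v)=1$, and split on the number of incorrect (value-$1$) messages in $\varpi_{j-1}(:,v)$. If this number is at least two, then for every outgoing edge to $c$ the remaining pair $\varpi_{j-1}(:\backslash c,v)$ is either $\{1\}$ (giving $\omega_j(v,c)=1$) or mixed (giving $\omega_j(v,c)=\mathbf{r}(v)=1$), so $\omega_j(v,:)=\{1\}$. If exactly one message, arriving from $c$, is incorrect, then $\varpi_{j-1}(:\backslash c,v)=\{0\}$ forces $\omega_j(v,c)=0$, while for the other two edges the excluded pair is mixed and yields $\mathbf{r}(v)=1$; this is exactly the claimed correct message along the edge carrying the incorrect message. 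If no message is incorrect, every excluded pair is $\{0\}$ and $\omega_j(v,:)=\{0\}$. The base case $\omega_1(v,:)=\{1\}$ is immediate from $\omega_1(v,c)=\mathbf{r}(v)=1$.

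Part (ii) is the exact dual, obtained by replacing bad with good, $\mathbf{r}(v)=1$ with $\mathbf{r}(v)=0$, and interchanging the roles of $0$ and $1$; the same three-way split on the number of incorrect incoming messages gives the three bullets. For part (iii), set $S=|\omega_j(:,c)=1|$; then the check rule gives $\varpi_j(c,v)=\big(\sum\omega_j(:\backslash v,c)\big)\bmod 2=(S-\omega_j(v,c))\bmod 2$, and evaluating this for $\omega_j(v,c)\in\{0,1\}$ shows $\varpi_j(c,v)=\omega_j(v,c)\oplus 1$ when $S$ is odd and $\varpi_j(c,v)=\omega_j(v,c)$ when $S$ is even. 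Part (iv) follows at once from the majority-vote estimation rule: with three incoming messages, two or more incorrect (value-$1$) messages make the majority equal to $1$, which differs from the original value $0$, so the estimate is incorrect.

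Because every step reduces to reading off one of three cases from the update rules, I do not expect a genuine obstacle; the discipline needed is only to keep track of which of the three incoming/outgoing messages is being excluded and to remember that the all-zero assumption pins down the value in the otherwise branch. The role of the column-weight-three hypothesis is precisely that it makes the trichotomy both remaining messages $0$ / both $1$ / mixed exhaustive, so no further cases arise.
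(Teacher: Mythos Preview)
Your proposal is correct and follows the same approach as the paper, which simply states that the lemma follows from the description of the Gallager A algorithm. You have merely written out explicitly the finite case analysis that the paper leaves implicit.
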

\begin{proof}
Follows from the description of the Gallager A algorithm.
\end{proof}

Now let $v$ be a variable node which sends an incorrect message along the edge $e=(v,c)$ in the $(k+1)^{th}$ iteration. The message along $e$ depends only on the variable nodes and check nodes in $\mathcal{N}^{2k}_{e}$. Under the assumption that $\mathcal{N}^{2k}_{e}$ is a tree, the above observations provide a method to find all the possible configurations of bad variable nodes in $\mathcal{N}^{2k}_{e}$. We have the following two cases:

(a) $v$ is a bad variable node: In this case, there must be at least one variable node in $\mathcal{N}^2_{e}$ which sends an incorrect message in the $k^{th}$ iteration.

(b) $v$ is a good variable node: In this case, there must be at least two variable nodes in $\mathcal{N}^2_{e}$ which send an incorrect message in the $k^{th}$ iteration.

This is repeated $k$ times until we reach the first iteration, at which point only the nodes and edges in $\mathcal{N}^{2k}_{e}$ would have been explored and all of these are guaranteed to be distinct as $\mathcal{N}_e^{2k}$ is a tree. Since only bad variables send incorrect messages in the first iteration, we can calculate the number of bad variables in each configuration. Specifically, let $v$ be a variable node which sends an incorrect message along the $e=(v,c)$ in the second iteration. If $v$ is a good variable node, then $\mathcal{N}^2_{e}$ must have at least two bad variable nodes. If $v$ is bad, $\mathcal{N}^2_{e}$ must have at least one bad variable node. Following this approach, we have Fig. \ref{iteration2}, Fig. \ref{iteration3} and Fig. \ref{iteration4} which show the possible configurations of bad variable nodes in $\mathcal{N}^{2k}_e$ so that $v$ sends an incorrect message in the $(k+1)^{th}$ iteration, for $k=1, k=2$ and $k=3$, respectively.
%It should also be noted that the incoming messages to $v$ at the end of $k^{th}$ iteration are independent.

\textit{Remarks:} (i) Fig. \ref{iterations234} shows configurations with at most $2k+1$ bad variable nodes. \\
(ii) We do not illustrate configurations in which a bad variable node receives two incorrect messages in the $k^{th}$ iteration, so that it sends an incorrect message in the $(k+1)^{th}$ iteration. However, all such configurations can be found by considering configurations involving good variable nodes and converting a good varaible node to a bad one. This increases the number of bad variable nodes in the condiguration. As will be seen later, such configurations are not relevant for establishing our main result.

The above observations help establish bounds on $B(\mathcal{N}_e^{2k})$, which we state in the following lemma.
\begin{lema}
(i) If $v$ is a bad variable node sending an incorrect message on $e=(v,c)$ in the $(k+1)^{th}$ iteration and $\mathcal{N}^{2k}_e$ is a tree, then $B(\mathcal{N}_e^{2k}) \geq k+1$. If $B(\mathcal{N}^{2k-2}_{e})=1$, i.e., $b^{d}_{e}=0$ for $d=2,4,\ldots,2(k-1)$, then $B(\mathcal{N}_e^{2k}) \geq 2^{(k-1)}+1$. If $B(\mathcal{N}_e^{2k-2})=2$, then $B(\mathcal{N}_e^{2k}) \geq 2^{(k-2)}+2$. 

(ii) If $v$ is a good variable node sending an incorrect message on $e=(v,c)$ in the $(k+1)^{th}$ iteration and $\mathcal{N}^{2k}_e$ is a tree, then $B(\mathcal{N}_e^{2k}) \geq 2k$. If $B(\mathcal{N}_e^{2k-2})=0$, then $B(\mathcal{N}^{2k}_{e})\geq 2^k$. If $B(\mathcal{N}_e^{2k-2})=1$, then $B(\mathcal{N}^{2k}_e)\geq 2^{(k-1)}+2^{(k-2)}+1$. If $B(\mathcal{N}_e^{2k-2})=2$, then $B(\mathcal{N}_e^{2k}) \geq 2^{(k-1)}+2$. 
\end{lema}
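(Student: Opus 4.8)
The plan is to translate the combinatorial description of the edges carrying incorrect messages, worked out just before the statement, into a recursion on the tree $\mathcal{N}^{2k}_e$ and then count bad variable nodes level by level. First I would record the three local rules governing when $v$ emits an incorrect message along $e=(v,c)$ in the $(k+1)^{\text{th}}$ iteration, all immediate from Lemma~1: a bad $v$ requires at least one of its other two checks to have sent it an incorrect message in the $k^{\text{th}}$ iteration; a good $v$ requires both; and a check relays an incorrect message toward $v$ exactly when precisely one of its other two variable neighbours sent it an incorrect message. Since $\mathcal{N}^{2k}_e$ is a tree, each such requirement pushes the analysis two levels deeper (one check, one variable) while lowering the iteration index by one, and all nodes met are distinct.

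To obtain the two unconditional bounds I would let $A_b(k)$ and $A_g(k)$ denote the minimum of $B(\mathcal{N}^{2k}_e)$ over all configurations in which a bad, respectively good, root $v$ sends an incorrect message in iteration $k+1$. The rules above give $A_b(k)=1+\min\{A_b(k-1),A_g(k-1)\}$ and $A_g(k)=2\min\{A_b(k-1),A_g(k-1)\}$, with base case $A_b(0)=1$, since only a bad variable can emit an incorrect message in the first iteration. A short induction then yields $A_b(k)=k+1$ and $A_g(k)=2k$, which are exactly the first claims of parts~(i) and~(ii).

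For the conditional bounds I would view the set of active variable nodes as a subtree of $\mathcal{N}^{2k}_e$ in which every \emph{good} active node branches into two active grandchildren (both its child checks must fire), while every \emph{bad} active node need branch into only one. Every active node at depth $2k$ emits an incorrect message in the first iteration and is therefore bad, so the total bad count equals the number of these depth-$2k$ leaves plus the number of bad interior nodes. A good root with no interior bad node ($B(\mathcal{N}^{2k-2}_e)=0$) forces pure doubling and hence $2^k$ leaves; a bad root with no other interior bad node ($B(\mathcal{N}^{2k-2}_e)=1$, equivalently $b^d_e=0$ for $d=2,4,\ldots,2(k-1)$ once the bad root is accounted for) yields one node at depth $2$ that then doubles, i.e.\ $2^{k-1}$ leaves plus the root. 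The remaining cases come from inserting the one or two permitted interior bad nodes: a bad node at depth $2j$ replaces a subtree with $2^{k-j}$ leaves by one with $2^{k-j-1}$ leaves, so placing each insertion at the shallowest level (depth $2$) and in a distinct child-subtree maximises the reduction in leaves. Carrying out this bookkeeping gives $2^{k-1}+2^{k-2}+1$, $2^{k-1}+2$, and $2^{k-2}+2$ for the three respective cases.

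The hard part will be justifying that the leaf-minimising placement is indeed the shallowest, disjoint one, i.e.\ that putting a bad interior node deeper, or nesting it inside the subtree already cut off by another bad node, can only increase the leaf count. I would settle this by comparing the candidate placements directly: an ancestor-descendant pair at depths $2$ and $4$ leaves $2^{k-1}+2^{k-3}$ leaves, strictly more than the $2^{k-1}$ leaves of the disjoint depth-$2$ pair, and the monotonicity of $2^k-2^{k-j-1}$ in $j$ disposes of the single-node cases. The only other point needing care is bookkeeping of the iteration-to-depth indexing, namely that a node emitting an incorrect message in iteration $k+1-j$ sits at depth $2j$, which is immediate from the two-levels-per-iteration reduction established at the outset.
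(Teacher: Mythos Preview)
Your approach is correct and close in spirit to the paper's: both trace incorrect messages backward through the tree $\mathcal{N}^{2k}_e$, reducing the iteration index by one with every two levels of depth. For the unconditional bounds $B\ge k+1$ and $B\ge 2k$, your recursion $A_b(k)=1+\min\{A_b(k-1),A_g(k-1)\}$, $A_g(k)=2\min\{A_b(k-1),A_g(k-1)\}$ is exactly the induction the paper runs, just written as a recurrence.

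The genuine difference is in the conditional bounds. The paper carries all seven statements through a single induction on $k$: for instance, for a bad root with $B(\mathcal{N}^{2k}_e)=2$ it observes that the depth-$2$ node $v_2$ is either bad with $B(\mathcal{N}^{2(k-1)-2}_{e_2})=1$ or good with $B(\mathcal{N}^{2(k-1)-2}_{e_2})=1$, and in each case invokes the corresponding inductive hypothesis at level $k-1$. You instead build an explicit ``active subtree'' model (good interior nodes branch to two, bad to one, leaves at depth $2k$ are forced bad) and solve a placement-optimisation problem directly, arguing that bad interior nodes minimise the leaf count when placed as shallow and as disjoint as possible. Your route makes the extremal configurations visible and explains \emph{why} the particular powers of two appear; the paper's route is more uniform and dispatches every case in one line once the loaded inductive statement is assumed. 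One small caveat: your phrase ``its other two variable neighbours'' presumes check degree three, which the paper does not; for the lower bound you only need that an incorrect outgoing check message forces at least one incorrect incoming one, so the argument is unaffected, but the wording should be relaxed.
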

\begin{proof}
The proof is by induction on $k$. It is easy to verify the bounds for $k=2$. Let the bounds be true for some $k \geq 2$. Let $v_0$ be a bad variable node sending an incorrect message on $e=(v_0,c)$ in the $(k+2)^{th}$ iteration. Further, assume that $\mathcal{N}^{2k+2}_e$ is a tree. Then, $\mathcal{N}^1_{e}$ has at least one check node $c_1$ which sends an incorrect message along the edge $e_1=(c_1,v_0)$ in the $(k+1)^{th}$ iteration. This implies that $\mathcal{N}^2_e$ has at least one variable node $v_2$ sending an incorrect message in the $(k+1)^{th}$ iteration along the edge $e_2=(v_2,c_1)$. Since a path of length $2$ exists between $v_0$ and $v_1$, $\mathcal{N}^{i}_{e_2} \subset \mathcal{N}^{i+2}_{e}$. 

If $v_2$ is a bad variable node, then $B(\mathcal{N}^{2k}_{e_2}) \geq k+1$ and consequently $B(\mathcal{N}^{2k+2}_{e}) \geq k+2$. If $v_2$ is a good variable node, then $B(\mathcal{N}^{2k}_{e_2}) \geq 2k$ and consequently $B(\mathcal{N}^{2k+2}_{e}) \geq 2k+1 > k+1$. 

If $b^{d}_{e}=0$ for $d=2,4,\ldots,2k$, then $v_2$ is a good variable node such that $B(\mathcal{N}^{2k-2}_{e_2})=0$ which implies that  $B(\mathcal{N}^{2k}_{e_2})\geq 2^k$ by the induction hypothesis. Hence, $B(\mathcal{N}^{2k+2}_{e})\geq 2^k+1$. 

If $B(\mathcal{N}_e^{2k})=2$ then either (a) $v_2$ is a bad variable node with $b^{d}_{e_2}=0$ for $d=2,4,\ldots,2(k-1)$ which implies that $B(\mathcal{N}_{e_2}^{2k}) \geq 2^{(k-1)}+1$ by the induction hypothesis. Hence, $B(\mathcal{N}_e^{2k+2}) \geq 2^{(k-1)}+2$, or (b) $v_2$ is a good variable node with $B(\mathcal{N}_{e_2}^{2k-2})=1$ which implies that $B(\mathcal{N}^{2k}_{e_2})\geq 2^{(k-1)}+2^{(k-2)}+1$ by the induction hypothesis. Hence, $B(\mathcal{N}_e^{2k+2}) \geq 2^{(k-1)}+2^{(k-2)}+2 > 2^{(k-1)}+2$.

By the principle of mathematical induction, the bounds are true for all $k$ when $v_0$ is a bad variable node. The proofs are similar for the case when $v_0$ is a good variable node.
\end{proof}
\section{The Main Theorem}\label{maintheorem}
In this section, we prove that a column-weight-three code with Tanner graph of girth $g \geq 10$ can correct $g/2-1$ errors in $g/2$ iterations of the Gallager A algorithm. The proof proceeds by finding, for a particular choice of $k$, all configurations of $g/2-1$ or less bad variable nodes which do not converge in $k+1$ iterations and then prove that these configurations also converge in subsequent iterations. When $g/2$ is even, we use $k=g/4-1$ (or $g/2-1=2k+1$) and when $g/2$ is odd, we use $k=(g-2)/4$ (or $g/2-1=2k$). We deal with these cases separately.

\subsection{$g/2$ is even}
Let $v_0$ be a variable node which receives two incorrect messages along the edges $e_1=(c_1^1,v_0)$ and $e_2=(c_1^2,v_0)$ at the end of $(k+1)^{th}$ iteration. This implies that $N^1_{e_1}$ and $N^1_{e_2}$ each has a variable node, $v_2^1$ and $v_2^2$ respectively, that sends an incorrect message in the $(k+1)^{th}$ iteration to check nodes $c_1^1$ and $c_1^2$, respectively. Let $e_3=(v_2^1,c_1^1)$, $e_4=(v_2^2,c_1^2)$, $e_5=(c_1^1,v_2^1)$, and $e_6=(c_1^2,v_2^2)$ (see Fig. \ref{girth16_1} for an illustration). All possible configurations of bad variable nodes in $\mathcal{N}^{2k}_{e_3}$ and $\mathcal{N}^{2k}_{e_4}$ can be determined using the method outlined in Section \ref{kiterations}. Since there exists a path of length $3$ between $v_2^2$ and $c_1^1$, we have $\mathcal{N}^{i}_{e_4} \subset \mathcal{N}^{i+3}_{e_5}$. Also, $\mathcal{N}^{i}_{e_3} \cap \mathcal{N}^{j}_{e_5} =\phi $ for $i+j < g-1 = 4k+3$. Therefore, $\mathcal{N}^{i}_{e_3} \cap \mathcal{N}^{j}_{e_4} = \phi$ for $i+j < 4k$. This implies that $\mathcal{N}^{2k}_{e_3}$ and $\mathcal{N}^{2k}_{e_4}$ can have a common node only at depth $2k$. The total number of bad variable nodes in $\mathcal{N}^{2k}_{e_3} \cup \mathcal{N}^{2k}_{e_4}$, $B(\mathcal{N}^{2k}_{e_3} \cup \mathcal{N}^{2k}_{e_4})$, in any configuration is therefore lower bounded by $B(\mathcal{N}^{2k-2}_{e_3})+B(\mathcal{N}^{2k-2}_{e_4})+ \max(b^{2k}_{e_3},b^{2k}_{e_4})$ or equivalently $\max\left(B(\mathcal{N}^{2k-2}_{e_3})+B(\mathcal{N}^{2k}_{e_4}),B(\mathcal{N}^{2k}_{e_3})+B(\mathcal{N}^{2k-2}_{e_4})\right)$. We are interested only in the valid configurations, i.e., at most $g/2-1$ bad variable nodes, free from cycles of length less than $g$. We divide the discussion into three parts: (1) we find all the possible valid configurations for the case when $g=16$; (2) we then proceed iteratively for $g > 16$; (3) We consider the case $g=12$ separately as the arguments for $g\geq 16$ do not hold for this case.

\subsubsection{$g=16$}
Let $v$ be a variable node which sends an incorrect message in the iteration $k+1=g/4=4$ along edge $e=(v,c)$, given that there are at most seven bad variables and $\mathcal{N}^7_{e}$ is a tree. Fig. \ref{iteration4} illustrates different configurations of bad variable nodes in $\mathcal{N}^{6}_{e}$. As remarked earlier, Fig. \ref{iteration4} does not show configurations in which a bad variable node has to receive two incorrect messages in an iteration to send an incorrect message along the third edge in the next iteration. It can be seen in the proof of Theorem \ref{theorem2} that these cases do not arise in valid configurations. 

Let $v_0,v_2^1,v_2^2,e_1,e_2,e_3,e_4$ be defined as above with $k=3$. Using the arguments outlined above (and the constraint that $g=16$), all possible configurations such that $B(\mathcal{N}^{2k}_{e_3} \cup \mathcal{N}^{2k}_{e_4}) \leq 7$ can be found. Fig. \ref{girth16} shows all such possible configurations. 
%Since, all the configurations have seven bad variable nodes, we note that there do not arise cases where a bad variable node needs to receives two incorrect message to send an incorrect message along the third edge. 
\begin{figure}
\subfigure[] % caption for subfigure a
{
    \label{girth16_1}

\includegraphics[width=0.93in]{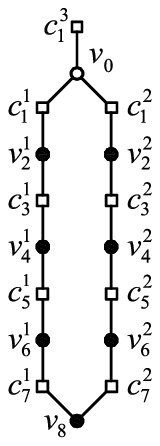}
}
%\hspace{-0.2in}
\subfigure[] % caption for subfigure a
{
    \label{girth16_2}

\includegraphics[width=0.93in]{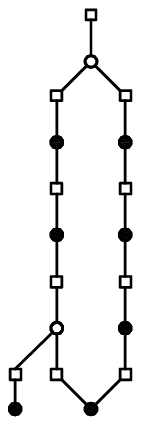}
}
%\hspace{-0.2in}
\subfigure[] % caption for subfigure a
{
    \label{girth16_3}

\includegraphics[width=0.93in]{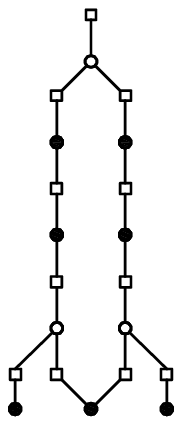}
}
\caption{Configurations of at most $7$ bad variable nodes, free of cycles of length  less than $16$, which do not converge in $4$ iterations.}
\label{girth16}
\end{figure}

\subsubsection{$g \geq 20$}
Let $\mathcal{C}_g$ be a valid configuration in which there exists a variable node $v_0$ which receives two incorrect messages along the edges $e_1=(c_1^1,v_0)$ and $e_2=(c_1^2,v_0)$ at the end of $(k+1)^{th}$ iteration. This implies that $\mathcal{N}^1_{e_1}$ and $\mathcal{N}^1_{e_2}$ each has a variable node, $v_2^1$ and $v_2^2$, respectively, that sends an incorrect message in the $(k+1)^{th}$ iteration. We have the following lemma.
\begin{lema}\label{lemma1}
$v_2^1$ and $v_2^2$ are bad variable nodes.
\end{lema}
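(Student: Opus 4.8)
The plan is to prove $\mathtt{Lemma~\ref{lemma1}}$ by contradiction, showing that if either $v_2^1$ or $v_2^2$ were a good variable node, the resulting configuration would violate the constraint of a valid configuration $\mathcal{C}_g$, namely that it contains at most $g/2-1 = 2k+1$ bad variable nodes. I would exploit the bounds on $B(\mathcal{N}_e^{2k})$ established in the second lemma, together with the girth-based tree and disjointness properties already set up in the $g \geq 20$ subsection.

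First, I would set up the combinatorial lower bound on the total number of bad variables. Following the argument preceding the lemma, since $g \geq 20 = 4k+4$, we have that $\mathcal{N}^{2k}_{e_3}$ and $\mathcal{N}^{2k}_{e_4}$ can share nodes only at depth $2k$, where $e_3=(v_2^1,c_1^1)$ and $e_4=(v_2^2,c_1^2)$. Hence the total count $B(\mathcal{N}^{2k}_{e_3} \cup \mathcal{N}^{2k}_{e_4})$ is bounded below by $\max\bigl(B(\mathcal{N}^{2k-2}_{e_3})+B(\mathcal{N}^{2k}_{e_4}),\, B(\mathcal{N}^{2k}_{e_3})+B(\mathcal{N}^{2k-2}_{e_4})\bigr)$. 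I would observe that $v_0$ itself is counted separately (it lies on the common path between the two neighborhoods at the base), so that any valid configuration must satisfy this union bound plus one for $v_0$ not exceeding $2k+1$.

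Next comes the case analysis, which is the heart of the argument. Suppose, for contradiction, that $v_2^1$ is a good variable node. By part (ii) of the bound lemma, $B(\mathcal{N}^{2k}_{e_3}) \geq 2k$, and by the weakest nontrivial bound we still have $B(\mathcal{N}^{2k-2}_{e_3}) \geq 2(k-1) = 2k-2$ when $v_2^1$ is good in the $k$-th iteration. Meanwhile $v_2^2$ sends an incorrect message on $e_4$, so whether it is good or bad, $B(\mathcal{N}^{2k}_{e_4}) \geq k+1$ (using the bad case as the minimum, part (i)). Substituting into the union bound gives $B(\mathcal{N}^{2k}_{e_3} \cup \mathcal{N}^{2k}_{e_4}) \geq B(\mathcal{N}^{2k}_{e_3})+B(\mathcal{N}^{2k-2}_{e_4}) \geq 2k + 1$, and adding the contribution of $v_0$ pushes the total strictly above $2k+1$, contradicting validity. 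The symmetric argument rules out $v_2^2$ being good.

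The main obstacle I anticipate is bookkeeping the overlap correctly: I must be careful that $v_0$, $v_2^1$, $v_2^2$, and the nodes deep in the two neighborhoods are all genuinely distinct so that the lower bounds add without double-counting. This is exactly where the girth hypothesis $g \geq 20$ is essential, guaranteeing through $\mathcal{N}^{i}_{e_3} \cap \mathcal{N}^{j}_{e_4} = \phi$ for $i+j < 4k$ that overlaps occur only at the deepest level, and that $v_0$ is not absorbed into either subtree. I would also need to verify that the minimal bound I invoke for the good case genuinely exceeds the budget for every relevant parity of $k$; if the crude estimate $2k$ turns out to be too tight in a boundary case, I would refine it using the sharper alternatives in part (ii) of the bound lemma (such as the $2^k$ or $2^{(k-1)}+2^{(k-2)}+1$ bounds), which grow much faster than $2k+1$ and therefore immediately force the contradiction for all $k \geq 2$.
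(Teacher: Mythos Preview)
Your proposal has two genuine gaps that break the argument as stated.

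First, the step ``adding the contribution of $v_0$ pushes the total strictly above $2k+1$'' is unjustified: in the setup of this subsection $v_0$ is merely a variable node receiving two incorrect messages at the end of iteration $k+1$; nowhere is it assumed to be bad. So you cannot add $+1$ for $v_0$, and your union bound $B(\mathcal{N}^{2k}_{e_3}\cup\mathcal{N}^{2k}_{e_4})\geq 2k+1$ sits exactly on the budget and yields no contradiction.

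Second, the claim $B(\mathcal{N}^{2k-2}_{e_3})\geq 2(k-1)$ ``when $v_2^1$ is good in the $k$-th iteration'' does not follow from Lemma~2. That lemma bounds $B(\mathcal{N}^{2k}_e)$ under the hypothesis that $v$ sends an incorrect message in iteration $k+1$; it says nothing about $B(\mathcal{N}^{2k-2}_e)$ under the same hypothesis, and $v_2^1$ is not known to send an incorrect message in iteration $k$. The only information Lemma~2 provides about $B(\mathcal{N}^{2k-2}_e)$ is conditional: \emph{if} $B(\mathcal{N}^{2k-2}_e)\in\{0,1,2\}$, \emph{then} $B(\mathcal{N}^{2k}_e)$ is at least the corresponding exponential quantity.

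The paper's proof exploits exactly this conditional structure, and the exponential bounds you treat as a fallback are in fact the core of the argument. One first shows $B(\mathcal{N}^{2k-2}_{e_3})\geq 2$ by contraposition: if it were $0$ or $1$, Lemma~2(ii) would force $B(\mathcal{N}^{2k}_{e_3})\geq 2^k$ or $\geq 2^{k-1}+2^{k-2}+1$, each already exceeding $2k+1$ for $k>3$. Then, whenever at least one of $v_2^1,v_2^2$ is good (say $v_2^2$), one has $B(\mathcal{N}^{2k}_{e_4})\geq 2k$, and combining gives $B(\mathcal{N}^{2k}_{e_3}\cup\mathcal{N}^{2k}_{e_4})\geq B(\mathcal{N}^{2k-2}_{e_3})+B(\mathcal{N}^{2k}_{e_4})\geq 2k+2$, a contradiction without ever invoking $v_0$. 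The mixed case (one good, one bad) needs its own variant of the $B(\mathcal{N}^{2k-2}_{e_3})\geq 2$ step using Lemma~2(i), again via the exponential bound $2^{k-1}+1$ and the restriction $k>3$.
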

\begin{proof}
The proof is by contradiction. We know that the total number of bad variables in any configuration is lower bounded by \\ $\max\left(B(\mathcal{N}^{2k-2}_{e_3})+B(\mathcal{N}^{2k}_{e_4}),B(\mathcal{N}^{2k}_{e_3})+B(\mathcal{N}^{2k-2}_{e_4})\right)$ and that $k>3$. We have two cases.

(a) $v_2^1$ and $v_2^2$ are both good variable nodes: We first note that in any valid configuration, $B(\mathcal{N}^{2k-2}_{e_3}) \geq 2$. Otherwise, we have $B(\mathcal{N}_{e_3}^{2k-2})=1$, and from Lemma 2, $B(\mathcal{N}^{2k}_{e_3}) \geq 2^{(k-1)}+2^{(k-2)} +1 >2k+1$ or, we have $B(\mathcal{N}_{e_3}^{2k-2})=0$, and $B(\mathcal{N}^{2k}_{e_3}) \geq 2^k +2 >2k+1$. Both cases are a contradiction as we have at most $2k+1$ bad variable nodes. Hence,  $B(\mathcal{N}^{2k-2}_{e_3}) \geq 2$.

Now, $B(\mathcal{N}^{2k}_{e_4})\geq 2k$, and hence we have $B(\mathcal{N}_{e_3}^{2k} \cup \mathcal{N}_{e_4}^{2k}) \geq B(\mathcal{N}^{2k}_{e_4}) + B(\mathcal{N}^{2k-2}_{e_3}) \geq 2k+2$, which is a contradiction.

(b) $v_2^1$ is a bad variable node and $v_2^2$ is a good variable node. The opposite case is identical. 

First we claim that in any valid configuration, $B(\mathcal{N}^{2k-2}_{e_3})\geq 2$. Since $v_2^1$ is a bad variable node, $B(\mathcal{N}^{2k-2}_{e_3}) \neq 0$. Assume that $B(\mathcal{N}^{2k-2}_{e_3})=1$. Then $B(\mathcal{N}^{2k}_{e_3})\geq 2^{(k-1)}+1$. Again,  $B(\mathcal{N}^{2k-2}_{e_4}) \geq 2$ implies that $B(\mathcal{N}^{2k}_{e_3} \cup \mathcal{N}^{2k}_{e_4}) \geq 2^{(k-1)}+3 > 2k+1$ (as $k>3$), which is a contradiction. Hence,  $B(\mathcal{N}^{2k-2}_{e_3})\geq 2$. 

Now, $B(\mathcal{N}^{2k-2}_{e_3})\geq 2$ and $B(\mathcal{N}^{2k}_{e_4})\geq 2k$, implies that $B(\mathcal{N}^{2k}_{e_3} \cup \mathcal{N}^{2k}_{e_4}) \geq 2k+2$ which is a contradiction.

Hence, $v_2^1$ and $v_2^2$ are both bad variable nodes.

\end{proof}
We now have the following theorem:
\begin{te}\label{theorem1}
If $\mathcal{C}_g$ is a configuration which does not converge in $(k+1)$ iterations, then there exists a configuration $\mathcal{C}_{g-4}$ which does not converge in $k$ iterations.
\end{te}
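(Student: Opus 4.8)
The plan is to realize Theorem \ref{theorem1} as a graph-surgery reduction that strips one layer off a non-converging configuration. I start from the set-up preceding the statement: a valid $\mathcal{C}_g$ in which $v_0$ receives two incorrect messages at the end of iteration $k+1$, with witnessing predecessors $v_2^1,v_2^2$ sending incorrect messages in iteration $k+1$ along $e_3$ and $e_4$. By Lemma \ref{lemma1} both $v_2^1$ and $v_2^2$ are bad, so by Lemma 2 each witness tree obeys $B(\mathcal{N}^{2k}_{e_3})\ge k+1$ and $B(\mathcal{N}^{2k}_{e_4})\ge k+1$. Since the total budget is only $2k+1$ bad variables and $v_0\notin\mathcal{N}^{2k}_{e_3}\cup\mathcal{N}^{2k}_{e_4}$ (the directed neighborhoods leave $v_2^1,v_2^2$ through checks other than $c_1^1,c_1^2$, so $v_0$ cannot be reached without closing a cycle of length $<g$), the two trees are forced to share at least one bad node, and by the disjointness $\mathcal{N}^i_{e_3}\cap\mathcal{N}^j_{e_4}=\phi$ for $i+j<4k$ this can only happen at depth $2k$. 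Together with the length-$4$ neck $v_2^1$--$c_1^1$--$v_0$--$c_1^2$--$v_2^2$, this shows that in the minimal configurations the critical part of $\mathcal{C}_g$ closes into a cycle of length exactly $g=4k+4$.

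Before building $\mathcal{C}_{g-4}$ I would record why each of $v_2^1,v_2^2$ receives an incorrect message in iteration $k$ along a check other than $c_1^1,c_1^2$: a bad node sending an incorrect message along $e_3$ in iteration $k+1$ cannot have received all-correct messages in iteration $k$ (it would then send all-correct), and if it received exactly one incorrect message, that message cannot lie along $c_1^1$ (else $e_3$ would carry a correct message). Hence there is a check $c_2^1\ne c_1^1$ feeding $v_2^1$ an incorrect message in iteration $k$, and symmetrically a check $c_2^2\ne c_1^2$ for $v_2^2$.

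The construction of $\mathcal{C}_{g-4}$ is then to contract the neck: delete $v_0,c_1^1,c_1^2$ and identify $v_2^1$ with $v_2^2$ into a single bad variable $w$, keeping the tree-side edges through $c_2^1,c_2^2$ that already carry incorrect messages in iteration $k$. Contracting the length-$4$ neck to the single node $w$ shortens the critical length-$g$ cycle to length $g-4$. The surgery buys a one-iteration speed-up: $w$ now collects the incorrect message from the $c_2^1$ side and the incorrect message from the $c_2^2$ side simultaneously, so it receives two incorrect messages already at the end of iteration $k$, which is exactly non-convergence in $k$ iterations. Because certifying these two iteration-$k$ messages needs only a depth-$(2k-2)$ witness, the outermost (depth-$2k$) bad nodes of the original trees may be dropped; this, together with the identification of $v_2^1$ and $v_2^2$, is what pulls the bad count from $2k+1$ down to the girth-$(g-4)$ budget of $2k-1$.

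The main obstacles are the bookkeeping and the preservation of the message-passing dynamics, and I expect a short case split here. On the counting side I would invoke the exponential bounds of Lemma 2 exactly as in the proof of Lemma \ref{lemma1} to exclude the non-minimal situations (for instance $B(\mathcal{N}^{2k-2}_{e_3})\le 1$) that would overflow the $2k-1$ budget, and to confirm that $w$ can be realized with degree at most three; the sub-case in which a predecessor already receives two incorrect messages in iteration $k$ (hence fails on its own at iteration $k$, so no merge is needed) must be separated out. On the dynamics side, the delicate point is to certify that the surgery neither creates a cycle shorter than $g-4$ nor alters the messages governing the first $k$ iterations. Since those messages depend only on depth-$2k$ directed neighborhoods and the excised neck sits at the top of the structure, the witness trees of $w$ in $\mathcal{C}_{g-4}$ are isomorphic as rooted trees to the corresponding directed neighborhoods of $v_2^1,v_2^2$ in $\mathcal{C}_g$, so the incorrect messages reaching $w$ in iteration $k$ are literally the same ones that reached $v_2^1,v_2^2$ there. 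Establishing this isomorphism together with the girth bound is the crux; once it is in hand, the non-convergence of $\mathcal{C}_{g-4}$ in $k$ iterations is immediate.
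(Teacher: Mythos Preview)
Your overall strategy---strip one layer off the non-converging configuration so that the center already fails at iteration $k$---is exactly the paper's, but your particular surgery is the wrong one, and the patch you propose for it does not work. You delete $v_0,c_1^1,c_1^2$ and identify $v_2^1$ with $v_2^2$ into a single bad node $w$. Two things go wrong. First, $w$ inherits four check neighbors (the two checks $c_2^1,c_2^2$ carrying the incorrect iteration-$k$ messages \emph{and} the remaining third checks of $v_2^1$ and $v_2^2$), so it is not a column-weight-three variable; you flag this but do not resolve it. Second, when $v_0$ is good (which does occur among the valid configurations, cf.\ Fig.~\ref{girth16_1}), your merge reduces the bad count by only one, from $2k+1$ to $2k$, still over the $2k-1$ budget. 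Your fix---drop the depth-$2k$ bad variables because ``certifying these two iteration-$k$ messages needs only a depth-$(2k-2)$ witness''---is incorrect: $\varpi_k(c_2^1,w)$ depends on the $\omega_k$ messages of the variables adjacent to $c_2^1$, and each such $\omega_k$ depends on a depth-$2(k-1)$ directed neighborhood of a node already at depth $2$ from $w$, hence on variables at depth $2k$ from $w$. Removing the depth-$2k$ bad nodes destroys precisely the witness you need.

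The paper's construction avoids all of this by contracting the other way: keep $v_0$, remove the four nodes $v_2^1,v_2^2,c_1^1,c_1^2$, and add the two edges $(v_0,c_3^1)$ and $(v_0,c_3^2)$, where $c_3^1,c_3^2$ are the checks (already identified in your second paragraph) that fed $v_2^1,v_2^2$ incorrect messages in iteration $k$. Now $v_0$ still has exactly three check neighbors, the two removed variables are both bad by Lemma~\ref{lemma1}, so the bad count drops by exactly two regardless of whether $v_0$ is good or bad, and the depth-$2k$ subtrees below $c_3^1,c_3^2$ are kept intact so that $\varpi_k(c_3^1,v_0)$ and $\varpi_k(c_3^2,v_0)$ are literally the same incorrect messages that reached $v_2^1,v_2^2$ in $\mathcal{C}_g$. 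The girth check is then a two-line count on the lengths of the replaced paths. Your sketch becomes correct once you swap your contraction for this one; no pruning of outer bad nodes and no case split on the degree of $w$ is needed.
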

\begin{proof}
In the configuration $\mathcal{C}_g$, $v_2^1$ and $v_2^2$ are bad variable nodes which send incorrect messages to check nodes $c_1^1$ and $c_1^2$, respectively, in the $(k+1)^{th}$ iteration. This implies that $\mathcal{N}^1_{e_3}$ and $\mathcal{N}^1_{e_4}$ each has a check node, $c_3^1$ and $c_3^2$, respectively, that sends an incorrect message in $k^{th}$ iteration to $v_2^1$ and $v_2^2$, respectively. Now consider a configuration $\mathcal{C}_{g-4}$ constructed from $\mathcal{C}_g$ by removing the nodes $v_2^1,v_2^2,c_1^1,c_1^2$ and the edges connecting them to their neighbors and introducing the edges $(v_0,c_3^1)$ and $(v_0,c_3^2)$ (see Fig. \ref{configg_g4}). If $\mathcal{C}_g$ has at most $2k+1$ bad variable nodes and no cycles of length less than $g$, then $\mathcal{C}_{g-4}$ has at most $2(k-1)+1$ bad variable nodes and no cycles of length less than $g-4$. In $\mathcal{C}_{g-4}$ variable node $v_0$ receives two incorrect messages at the end of $k$ iterations and hence $\mathcal{C}_{g-4}$ is a valid configuration which does not converge in $k$ iterations.
\end{proof}

Theorem \ref{theorem1} gives a method to construct valid configurations of bad variable nodes for girth $g$ from valid configurations for girth $g+4$. Also, if $\mathcal{C}^1_{g}$ and $\mathcal{C}^2_{g}$ are two distinct valid configurations, then the configurations $\mathcal{C}^1_{g-4}$ and $\mathcal{C}^2_{g-4}$ constructed from $\mathcal{C}^1_{g}$ and $\mathcal{C}^2_{g}$, respectively, are distinct. Hence, the number of valid configurations for girth $g$ is greater than or equal to the number of valid configurations for girth $g+4$. Note that the converse of Theorem \ref{theorem1} is not true in general. However, for $g \geq 16$, we will show in Theorem \ref{theorem2} that any configuration for girth $g$ can be extended to a configuration for girth $g+4$.
\begin{te}\label{theorem2}
For $g/2$ even and $k \geq 3$, there are only three valid configurations which do not converge in $(k+1)$ iterations.
\end{te}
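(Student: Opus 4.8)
The plan is to prove the claim by induction on $k$ (equivalently, on the girth $g=4k+4$), using the reduction of Theorem~\ref{theorem1} together with Lemma~\ref{lemma1} to pass from girth $g$ down to girth $g-4$, and a matching extension argument to pass back up from $g-4$ to $g$. The base case $g=16$ (i.e. $k=3$) is supplied by the direct enumeration of the preceding $g=16$ subsubsection, which produces exactly the three valid configurations of Fig.~\ref{girth16}. The inductive step then shows that the number of valid configurations neither grows nor shrinks as the girth increases by $4$, so it remains equal to three for every $k\geq 3$.

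For the \emph{upper bound}, I would take an arbitrary valid configuration $\mathcal{C}_g$ of at most $2k+1$ bad variable nodes, free of cycles shorter than $g$, that fails to converge in $k+1$ iterations, with $k\geq 4$. By Lemma~\ref{lemma1} the two variable nodes $v_2^1,v_2^2$ feeding $v_0$ are bad, so the hypotheses of Theorem~\ref{theorem1} hold and its reduction yields a valid configuration $\mathcal{C}_{g-4}$ that does not converge in $k$ iterations. Since (as noted after Theorem~\ref{theorem1}) the reduction sends distinct configurations to distinct configurations, the assignment $\mathcal{C}_g\mapsto\mathcal{C}_{g-4}$ is injective; by the induction hypothesis its image lies among the three configurations for girth $g-4$, whence there are at most three valid configurations for girth $g$.

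For the \emph{lower bound} I would show that each of the three configurations $\mathcal{C}_{g-4}$ lifts, uniquely and validly, to a configuration $\mathcal{C}_g$. Concretely, in $\mathcal{C}_{g-4}$ one locates the variable node $v_0$ receiving two incorrect messages and the two check nodes $c_3^1,c_3^2$ along which they arrive, deletes the edges $(v_0,c_3^1),(v_0,c_3^2)$, and splices in two new bad variable nodes $v_2^1,v_2^2$ together with two new check nodes $c_1^1,c_1^2$ so that $v_0$'s two incorrect messages are now produced at iteration $k+1$ rather than $k$; this is exactly the inverse of the construction of Fig.~\ref{configg_g4}. One then verifies that the lift adds exactly two bad variable nodes, so the total is at most $2(k-1)+1+2=2k+1$, that the inserted paths create no cycle shorter than $g$ (every cycle through the spliced region is lengthened by $4$), and that $v_0$ receives two incorrect messages only at the end of iteration $k+1$. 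Distinct $\mathcal{C}_{g-4}$ clearly give distinct $\mathcal{C}_g$, yielding at least three and hence exactly three configurations.

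The main obstacle is the base case $g=16$, where Lemma~\ref{lemma1} is unavailable, since at $k=3$ the estimate $2^{(k-1)}+3>2k+1$ degenerates to $7>7$. Here one must instead sift the configurations of Fig.~\ref{iteration4} by hand, using the union bound $B(\mathcal{N}^{2k}_{e_3}\cup\mathcal{N}^{2k}_{e_4})\geq\max(B(\mathcal{N}^{2k-2}_{e_3})+B(\mathcal{N}^{2k}_{e_4}),\,B(\mathcal{N}^{2k}_{e_3})+B(\mathcal{N}^{2k-2}_{e_4}))$ together with the Lemma~2 estimates, to show that the constraints of at most seven bad variable nodes and no cycle shorter than $16$ rule out every placement except the three of Fig.~\ref{girth16}. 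In particular this is where one must discharge the case flagged in the Remark, namely a configuration in which a bad variable node would have to receive two incorrect messages before passing one on, by showing it forces either a short cycle or more than seven bad variable nodes. A secondary technical point is confirming that the splicing in the lower-bound step genuinely preserves girth, which is precisely what confines the extension argument to $g\geq 16$.
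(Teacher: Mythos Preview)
Your proposal is correct and follows essentially the same approach as the paper: the base case $k=3$ is handled by the direct $g=16$ enumeration, the upper bound comes from the injective reduction $\mathcal{C}_g\mapsto\mathcal{C}_{g-4}$ of Theorem~\ref{theorem1} (whose proof already incorporates Lemma~\ref{lemma1}), and the lower bound comes from the explicit splicing construction that extends each $\mathcal{C}_{g-4}$ to a $\mathcal{C}_g$. You are slightly more explicit about the inductive framing and about why the $g=16$ base case must be handled separately from Lemma~\ref{lemma1}, but the substance is identical.
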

\begin{proof}
For $k=3$, we have $g=16$ and there are only three valid configurations as given in Fig \ref{girth16}. So, for $g \geq 16$ and $g/2$ even, there can be at most three valid configurations. Each valid configuration for $g=16$, can be extended to a configuration $\mathcal{C}_{20}$ for $g=20$  by the addition of two bad variable nodes $v_{new}^1$ and $v_{new}^2$ in the following way. Remove the edges $(v_0,c_1^1)$ and $(v_0,c_1^2)$. Add bad variable nodes $v_{new}^1$ and $v_{new}^2$ and check nodes $c_{new}^1$ and $c_{new}^2$. Introduce the edges $(v_0,c_{new}^1)$, $(v_0,c_{new}^2)$, $(v_{new}^1, c_{new}^1)$, $(v_{new}^2,c_{new}^2)$, $(v_{new}^1,c_1^1)$ and $(v_{new}^2,c_{1}^2)$ (see Fig. \ref{configg_g4} for an illustration). It can be seen that $\mathcal{C}_{20}$ is a valid configuration for girth $g=20$. In general, the configurations constructed using the above method from the valid configurations for $g \geq 16$ are valid configurations for $g+4$. Fig. \ref{girth20} illustrates the three configurations for all $k \geq 3$.
\end{proof}
\textit{Remark:} In all valid configurations $\mathcal{C}_g$ with $g \geq 16$, no bad variable node receives two incorrect messages at the end of the $(k+1)^{th}$ iteration.
\begin{figure}
\begin{centering}
\subfigure[] % caption for subfigure a
{
    \label{configg_1}

\includegraphics[width=2in]{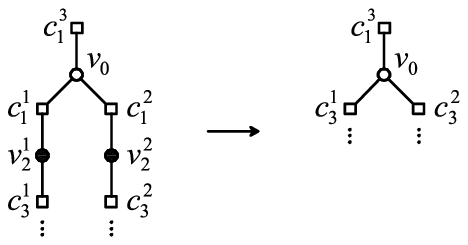}
}
%\hspace{0.5in}
\subfigure[] % caption for subfigure a
{
    \label{configg_g4}

\includegraphics[width=2in]{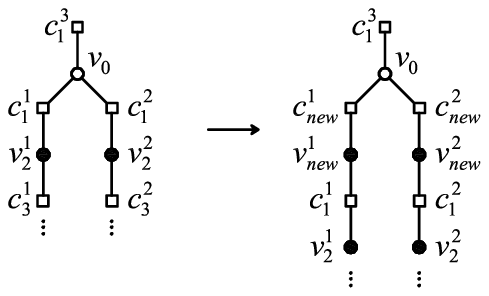}
}
\caption{\subref{configg_1} Construction of $\mathcal{C}_{g-4}$ from $\mathcal{C}_{g}$, \subref{configg_g4} Construction of $\mathcal{C}_{g+4}$ from $\mathcal{C}_{g}$.}
\label{configgtog_4}
\end{centering}
\end{figure}
\begin{figure}
%\hspace{-0.1in}
\subfigure[] % caption for subfigure a
{
    \label{girth20_1}

\includegraphics[width=0.93in]{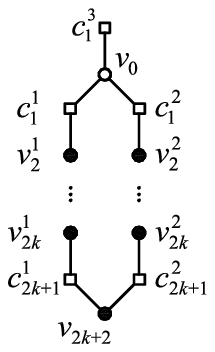}
}
%\hspace{-0.1in}
\subfigure[] % caption for subfigure a
{
    \label{girth20_2}

\includegraphics[width=0.93in]{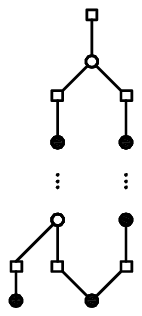}
}
%\hspace{-0.3in}
\subfigure[] % caption for subfigure a
{
    \label{girth20_3}

\includegraphics[width=0.93in]{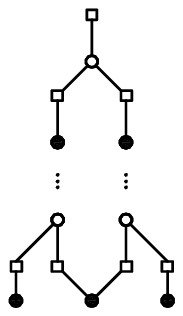}
}
\caption{Configurations of at most $2k+1$ bad variable nodes free of cycles of length less than $4k+4$ which do not converge in $(k+1)$ iterations.}
\label{girth20}
\end{figure}
\begin{te}\label{theorem3}
All valid configurations $\mathcal{C}_g$ converge to a codeword in $g/2$ iterations.
\end{te}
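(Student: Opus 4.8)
The plan is to reduce the claim to a short, explicit case analysis by exploiting the classification already obtained. For $g/2$ even and $g \geq 16$ (so $k = g/4 - 1 \geq 3$), any valid configuration that converges within $k+1 = g/4$ iterations is immediately fine, since $g/4 \leq g/2$ and the all-zero codeword is a fixed point of the Gallager A update (by Lemma 1, once every message is correct it stays correct). By Theorem \ref{theorem2} the only valid configurations that fail to converge in $k+1$ iterations are the three shown in Fig. \ref{girth20}. Hence for $g \geq 16$ it suffices to prove that these three configurations converge by iteration $g/2 = 2k+2$; equivalently, that the residual errors surviving the first $k+1$ iterations are cleared in at most $k+1$ further iterations. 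The case $g = 12$ (even, $k=2$) and the $g/2$-odd cases with $g \in \{10, 14, 18, \ldots\}$ (using $k = (g-2)/4$ and $g/2-1 = 2k$) would be handled by the same scheme, with $g = 10$ and $g = 12$ treated as explicit base cases, since Theorem \ref{theorem2} requires $k \geq 3$.

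For the main step I would trace the message passing beyond iteration $k+1$ on each of the three configurations. By the Remark following Theorem \ref{theorem2}, at the end of iteration $k+1$ no bad variable node receives two incorrect messages, so the only misestimated node is the good node $v_0$, which receives exactly two incorrect messages along $e_1$ and $e_2$. Applying Lemma 1(ii) to $v_0$ and Lemma 1(iii) to its neighboring checks, one computes the messages in iteration $k+2$ and sees that the incorrect messages retreat toward $v_0$ along the two branches; continuing this bookkeeping shows the set of incorrect messages, and hence the set of misestimated variable nodes, strictly shrinks each iteration until it vanishes. The goal is to verify that this happens no later than iteration $2k+2$.

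To make this argument uniform in $k$, rather than a separate calculation for each girth, I would run an induction on $g$ using the extension operation of Theorem \ref{theorem2} (Fig. \ref{configg_g4}). The base case $g = 16$, $k = 3$ is a finite computation: trace the three $7$-node configurations of Fig. \ref{girth16} and check each reaches the all-zero codeword by iteration $8$. For the inductive step, a girth-$(g+4)$ exceptional configuration is obtained from a girth-$g$ one by grafting the two extra bad nodes $v_{new}^1, v_{new}^2$ into the outermost layer; I would show that the decoder clears this outer layer in its first iterations and thereafter evolves exactly as in the girth-$g$ configuration, so convergence is delayed by precisely two iterations, i.e. from $g/2$ to $g/2 + 2 = (g+4)/2$, as required.

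The hardest part will be this last, uniform convergence analysis: establishing that the temporarily flipped good nodes do not seed new persistent errors, and that the correction of the grafted outer layer genuinely decouples from the inner dynamics, so that the induction on $g$ goes through cleanly. Relative to this, the small-girth base cases $g \in \{10, 12\}$, together with the parallel odd-$g/2$ classification, which I would first have to derive in the same spirit as Theorem \ref{theorem2}, are routine but necessary finite enumerations, since the $k \geq 3$ hypothesis of Theorem \ref{theorem2} excludes them.
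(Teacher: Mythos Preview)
Your overall plan---reduce via Theorem \ref{theorem2} to the three exceptional configurations and then show each converges by iteration $g/2$---is exactly the paper's strategy. The paper's actual proof, however, is far more bare-bones than your proposal: it simply traces the messages for one of the three $g=16$ configurations explicitly through iterations $5,6,7,8$ (showing that the outermost pair of bad variable nodes is ``peeled off'' each round until only the single incorrect message $\omega_8(v_0,c_1^3)=1$ remains) and then declares the other two configurations, and implicitly all larger $g$, ``similar''. No induction on $g$ is written out.

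Your inductive scheme is a reasonable way to make that ``similar'' rigorous, but the step as you describe it is misoriented. The extension in Theorem \ref{theorem2} inserts $v_{new}^1,v_{new}^2$ adjacent to $v_0$, i.e.\ at the \emph{innermost} position on each branch, not the outermost. Meanwhile the explicit $g=16$ trace shows correction propagating from the far end $v_8$ inward toward $v_0$, so the grafted nodes would be the \emph{last} cleared, not the first. The arithmetic of the induction still closes---one iteration is gained because the independent-tree phase now runs one step longer ($k'+1$ instead of $k+1$), and one more because there is one extra layer to peel, giving $2k'+2=(g+4)/2$---but the decoupling you propose has to be stated the other way round (identify iteration $k'+1+t$ on $\mathcal{C}_{g+4}$ with iteration $k+1+t$ on $\mathcal{C}_g$ after relabeling, the extra outer layer having already been absorbed). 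Finally, Theorem \ref{theorem3} sits in the $g/2$-even, $g\geq 16$ subsection; $g=12$ and the odd-$g/2$ cases are handled separately in the paper (the analogue is Theorem \ref{theorem6}), so they are not part of this statement.
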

\begin{proof}
We prove the theorem for one configuration for $g=16$ only. The proof is similar for other configurations. 
At the end of fourth iteration, let $v_0$ receive two incorrect messages (see Fig \ref{girth16_1}). It can be seen that there cannot exist another variable node (either good or bad) which receives two incorrect messages without violating the $g=16$ constraint. Also, $v_8$ receives all correct messages and $v_2^1,v_2^2,v_4^1,v_4^2,v_6^1,v_6^2$ receive one incorrect message each from $c_3^1,c_3^2,c_5^1,c_5^2,c_7^1,c_7^2$, respectively. In the fifth iteration, we have 
\begin{eqnarray}
\omega_5(v_0,c_1^3)&=&1, \nonumber  \\
\omega_5(v,:\backslash c)&=&\{1\}, ~~(v,c)\in \{(v_2^1,c_3^1),(v_2^2,c_3^2),\nonumber \\ 
&&~~~~~(v_4^1,c_5^1),(v_4^2,c_5^2),(v_6^1,c_7^1),(v_6^2,c_7^2)\}, \nonumber \\
\omega_5(v,c)&=&0, ~~\mbox{otherwise}. \nonumber 
\end{eqnarray}
In the sixth iteration, we have
\begin{eqnarray}
\omega_6(v_0,c_1^3)&=&1, \nonumber  \\
\omega_6(v,:\backslash c)&=&\{1\}, ~~(v,c)\in \{(v_2^1,c_3^1),(v_2^3,c_3^2),(v_4^1,c_5^1),\nonumber \\
&&~~~~~(v_4^2,c_5^2)\},\nonumber \\ 
\omega_6(v,c)&=&0, ~~\mbox{otherwise}. \nonumber 
\end{eqnarray}
In the seventh iteration, we have,
\begin{eqnarray}
\omega_7(v_0,c_1^3)&=&1, \nonumber  \\
\omega_7(v,:\backslash c)&=&\{1\}, ~~(v,c)\in \{(v_2^1,c_3^1),(v_2^2,c_3^2)\}\nonumber \\ 
\omega_7(v,c)&=&0, ~~\mbox{otherwise}. \nonumber 
\end{eqnarray}
Finally in the eighth iteration, we have,
\begin{eqnarray}
\omega_8(v_0,c_1^3)&=&1, \nonumber  \\
\omega_8(v,c)&=&0, ~~\mbox{otherwise} \nonumber 
\end{eqnarray}

At the end of eighth iteration, no variable node receives two incorrect messages and hence the decoder converges to a valid codeword. 

\end{proof}
\subsubsection{$g=12$}
In this case, $k=2$ and the incoming messages at the end of the second iteration are independent. We need to prove that any code with Tanner graph with $g=12$, can correct all error patterns of weight less than six. Let $v$ be a variable node which sends an incorrect message in the third iteration along edge $e=(v,c)$ given that there are at most $5$ bad variables and $\mathcal{N}^5_{e}$ is a tree. Fig. \ref{iteration4} illustrates different configurations of bad variable nodes in $\mathcal{N}^{4}_{e}$. 

Fig. \ref{girth12} shows all possible configurations of five or less bad variable nodes which do not converge to a codeword at the end of three iterations. However, all the configurations converge to a codeword in six iterations. The proofs for configurations in Fig. \ref{girth12}(a)-(h) are similar to the proof for configuration in Fig. \ref{girth16_1} and are omitted. Since, configuration (i) has only four bad variable nodes, a complete proof for convergence requires considering all possible locations of the fifth bad variable node, but other than that the structure of the proof is identical to that of the proof for the configuration in Fig. \ref{girth16_1}.  It is worth noting that in this case, there exist configurations in which a bad variable receives two incorrect messages at the end of the third iteration. However, all the configurations eventually converge to a codeword.
\begin{figure*}
\begin{centering}
\subfigure[] % caption for subfigure a
{
    \label{girth12_1}

\includegraphics[height=1.4in]{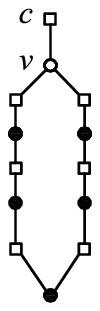}
}
\subfigure[] % caption for subfigure a
{
    \label{girth12_2}

\includegraphics[height=1.4in]{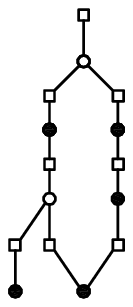}
}
\subfigure[] % caption for subfigure a
{
    \label{girth12_3}

\includegraphics[height=1.4in]{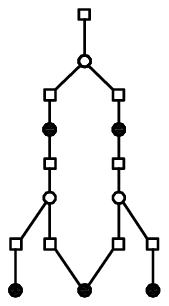}
}
\subfigure[] % caption for subfigure a
{
    \label{girth12_4}

\includegraphics[height=1.4in]{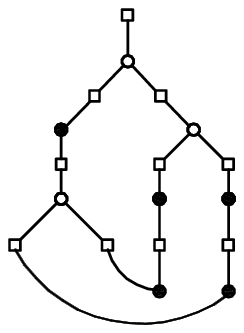}
}
\subfigure[] % caption for subfigure a
{
    \label{girth12_5}

\includegraphics[height=1.4in]{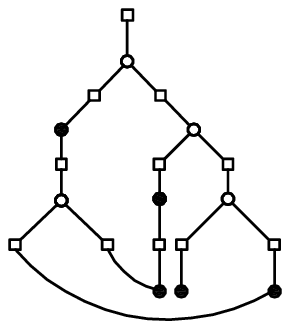}
}
\subfigure[] % caption for subfigure a
{
    \label{girth12_6}

\includegraphics[height=1.4in]{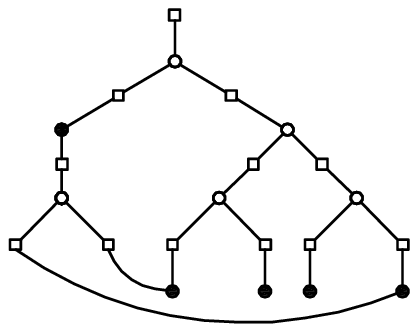}
}
\subfigure[] % caption for subfigure a
{
    \label{girth12_7}

\includegraphics[height=1.7in]{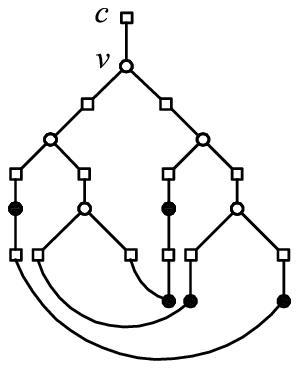}
}
\hspace{0.5in}
\subfigure[] % caption for subfigure a
{
    \label{girth12_8}

\includegraphics[height=1.7in]{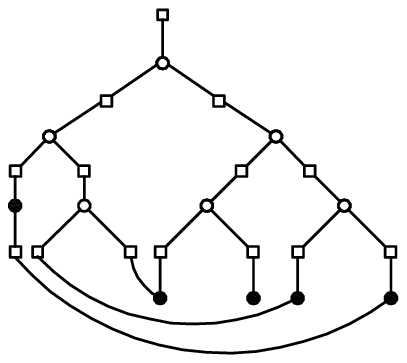}
}
\hspace{0.5in}
\subfigure[] % caption for subfigure a
{
    \label{girth12_9}

\includegraphics[height=1.7in]{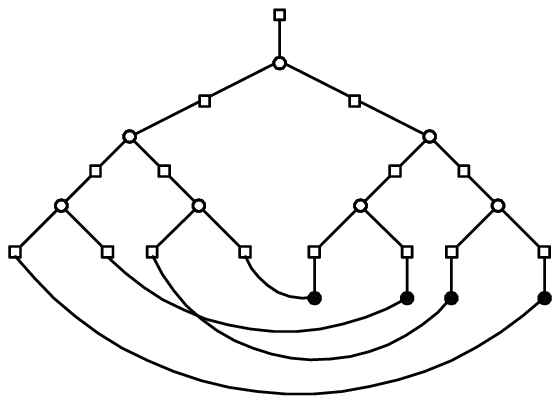}
}
\caption{Configurations of at most 5 variable nodes free of cycles of length less than 12 which do not converge in 3 iterations. }
\label{girth12}
\end{centering}
\end{figure*}

\subsection{$g/2$ is odd}
In this case, we have $k=(g-2)/4$ and we need to prove that the code is capable of correcting all error patterns of weight $g/2-1=2k$ or less. The methodology of the proof is similar to the proof in the case when $g/2$ is even. In this case, we have $\mathcal{N}^{i}_{e_3} \cap \mathcal{N}^{j}_{e_4} = \phi$ for $i+j < 4k-2$. This implies that $\mathcal{N}^{2k}_{e_3}$ and $\mathcal{N}^{2k}_{e_4}$ can have a common node at depth $2k-1$. Therefore, in any configuration, $B(\mathcal{N}^{2k}_{e_3}\cup \mathcal{N}^{2k}_{e_4}) $ is lower bounded by $\max\left(B(\mathcal{N}^{2k-4}_{e_3})+B(\mathcal{N}^{2k}_{e_4}), B(\mathcal{N}^{2k}_{e_3})+ B(\mathcal{N}^{2k-4}_{e_4})\right)$. The valid configurations in this case are the ones which satisfy $B(\mathcal{N}^{2k}_{e_3} \cup \mathcal{N}^{2k}_{e_4}) \leq 2k$. We again deal with $g \geq 14$ and $g=10$ separately.
\subsubsection{$g \geq 14$}
\begin{lema}\label{lemma2}
For $g=14$, there is only one configuration of six bad variable nodes which does not converge in four iterations.
\end{lema}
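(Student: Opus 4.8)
The plan is to mirror the analysis of the $g=16$ case, specialized to $k=3$ (so that $g/2-1=2k=6$ and the critical iteration is $k+1=4$), while exploiting the fact that for $g/2$ odd the directed neighborhoods $\mathcal{N}^{2k}_{e_3}$ and $\mathcal{N}^{2k}_{e_4}$ may already share a node at depth $2k-1=5$. A configuration fails to converge in four iterations exactly when some variable node $v_0$ is estimated incorrectly at the end of the fourth iteration, i.e.\ (by part (iv) of the first lemma) receives at least two incorrect messages. So I would start from such a $v_0$, take the two incoming incorrect messages along $e_1=(c_1^1,v_0)$ and $e_2=(c_1^2,v_0)$, and obtain the variable nodes $v_2^1,v_2^2$ sending incorrect messages in the fourth iteration along $e_3,e_4$, exactly as set up before the lemma.

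First I would show that $v_2^1$ and $v_2^2$ must both be bad, the analogue of Lemma \ref{lemma1} for this base case. Using Lemma 2, if $v_2^1$ were good then $B(\mathcal{N}^{6}_{e_3})\geq 2k=6$, which already exhausts the budget of six bad nodes; the girth-$14$ constraint forces $v_2^2$ (and any bad node in $\mathcal{N}^2_{e_4}$) to lie outside $\mathcal{N}^6_{e_3}$, since a path of length at most $6$ from $v_2^1$ to $v_2^2$ avoiding $c_1^1$ would close a cycle of length at most $10$ with the length-$4$ path through $v_0$. Hence any further bad node pushes the total past six. The remaining subcase of both $v_2^1,v_2^2$ good is killed by the same counting together with $B(\mathcal{N}^4_{e_3})\geq 2$ (otherwise Lemma 2 gives $B(\mathcal{N}^6_{e_3})\geq 7$), since those two nodes at depth at most $4$ are disjoint from the six forced in $\mathcal{N}^6_{e_4}$.

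With both $v_2^1$ and $v_2^2$ bad, I would use the minimal-chain structure implied by Lemma 2: a bad node sending an incorrect message in the fourth iteration requires a chain of incorrect-message passing reaching a bad node at depth $6$, so each arm on its own forces $B(\mathcal{N}^{6})\geq k+1=4$, namely the bad nodes $v_2^i,v_4^i,v_6^i$ together with a depth-$6$ bad node. Two disjoint such arms already cost eight bad nodes, so the only way to meet the budget of six is for the two arms to share their depth-$5$ check node (the overlap permitted only in the odd case); this identifies the two depth-$6$ leaves into a single shared check connecting $v_6^1$ and $v_6^2$, and collapses each arm to the three bad nodes $v_2^i,v_4^i,v_6^i$. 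The merge closes a cycle through $v_0$ of length exactly $14$ and yields a single symmetric configuration of six bad variable nodes with $v_0$ good.

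Finally I would argue uniqueness: any other placement either leaves the arms disjoint (cost eight, impossible), creates a cycle shorter than $14$ (violating the girth), or requires a good $v_2^i$ (already excluded); and one checks that the third checks hanging off the bad nodes carry only correct messages, so no second node is incorrectly estimated. I expect the main obstacle to be precisely the bookkeeping in the merge step: showing that the budget of six (one fewer than the seven available in the $g=16$ even case) together with the shallower depth-$(2k-1)$ overlap forces the arms to coincide in exactly one way, so that—unlike the three configurations of Theorem \ref{theorem2}—only a single configuration survives.
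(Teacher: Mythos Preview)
Your approach is the same as the paper's---start from a node $v_0$ receiving two incorrect messages at the end of iteration $k+1=4$, unfold the two arms $\mathcal{N}^{6}_{e_3},\mathcal{N}^{6}_{e_4}$, and combine the per-arm enumeration (Fig.~\ref{iterations234}) with the overlap bound for the odd case---so the overall strategy is right.  There is, however, a genuine gap in two places.

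First, in the subcase ``both $v_2^1,v_2^2$ good'' you claim that the two bad nodes forced at depth $\le 4$ of arm~$3$ are disjoint from $\mathcal{N}^{6}_{e_4}$.  For $g=14$ the disjointness condition is $i+j<10$, so a depth-$4$ node of arm~$3$ \emph{can} coincide with a depth-$6$ node of arm~$4$; your argument needs one more step (e.g.\ observe that depth-$2$ nodes of arm~$3$ really are disjoint from $\mathcal{N}^6_{e_4}$, force $B(\mathcal{N}^2_{e_3})=0$, and then apply Lemma~2 to the depth-$2$ good variables).  Second, and more seriously, once both $v_2^i$ are bad you assume the unique $B=4$ configuration in each arm is the chain $(0,2,4,6)$.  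But the per-arm enumeration (cf.\ Fig.~\ref{iteration4}) also contains the configuration ``bad at depths $0,2$, good at depth $4$, two bad at depth $6$,'' which likewise has $B(\mathcal{N}^6)=4$.  Your merge argument (``share the depth-$5$ check'') does not automatically exclude a chain in one arm paired with this second configuration in the other; you must check that any such pairing either exceeds the budget or closes a cycle shorter than $14$ through the shared depth-$\ge 4$ nodes.  The paper handles both issues implicitly by exhaustively combining the catalogued arm-configurations under the overlap bound $\max\bigl(B(\mathcal{N}^{2}_{e_3})+B(\mathcal{N}^{6}_{e_4}),\,B(\mathcal{N}^{6}_{e_3})+B(\mathcal{N}^{2}_{e_4})\bigr)\le 6$; you should do the same rather than assume the chain is the only candidate.
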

\begin{proof}
Using arguments outlined above and the configurations in Fig. \ref{iteration3} along with the constraint that $g \geq 14$, we conclude that there is only one configuration which does not converge in four iterations, which is shown in Fig. \ref{girth14}.
\end{proof}
\begin{lema}\label{lemma3}
If $\mathcal{C}_g$ with $g \geq 14$ is a valid configuration which does not converge in $k+1$ iterations, then $v_1^1$ and $v_1^2$ are bad variable nodes
\end{lema}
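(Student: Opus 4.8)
The plan is to prove this lemma by contradiction, in close analogy with Lemma~\ref{lemma1}, using the odd-case intersection bound already recorded above. Write $e_3=(v_1^1,c_1^1)$ and $e_4=(v_1^2,c_1^2)$, so that $\mathcal{N}^{2k}_{e_3}$ and $\mathcal{N}^{2k}_{e_4}$ are the two trees rooted at the variable nodes $v_1^1,v_1^2$ that send incorrect messages in the $(k+1)^{th}$ iteration. A valid configuration has at most $g/2-1=2k$ bad variable nodes. Because $\mathcal{N}^{i}_{e_3}\cap\mathcal{N}^{j}_{e_4}=\phi$ for $i+j<4k-2$, the bad nodes counted by $B(\mathcal{N}^{2k-4}_{e_3})$ are disjoint from those counted by $B(\mathcal{N}^{2k}_{e_4})$ (since $(2k-4)+2k=4k-4<4k-2$), so
\[
B\left(\mathcal{N}^{2k}_{e_3}\cup\mathcal{N}^{2k}_{e_4}\right)\geq B(\mathcal{N}^{2k-4}_{e_3})+B(\mathcal{N}^{2k}_{e_4}).
\]
Every bad node of the union is a bad node of the configuration, so any configuration with $B(\mathcal{N}^{2k}_{e_3}\cup\mathcal{N}^{2k}_{e_4})>2k$ is invalid. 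I would assume, for contradiction, that at least one of $v_1^1,v_1^2$ is good; by the symmetry $e_3\leftrightarrow e_4$ of the bound above I may take the good one to be $v_1^2$, so that Lemma~2 gives $B(\mathcal{N}^{2k}_{e_4})\geq 2k$.

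The whole argument then reduces to showing $B(\mathcal{N}^{2k-4}_{e_3})\geq 1$, for then the displayed bound yields $B(\mathcal{N}^{2k}_{e_3}\cup\mathcal{N}^{2k}_{e_4})\geq 2k+1>2k$, the desired contradiction. I would split on the type of the root $v_1^1$, mirroring cases (a) and (b) of Lemma~\ref{lemma1}. If $v_1^1$ is bad, it is itself a bad node at depth $0\leq 2k-4$ of $\mathcal{N}^{2k-4}_{e_3}$, so $B(\mathcal{N}^{2k-4}_{e_3})\geq 1$ immediately; this is the analogue of case (b) and is in fact simpler here. If $v_1^1$ is good (the analogue of case (a)), I claim $B(\mathcal{N}^{2k-4}_{e_3})=0$ is impossible in a valid configuration. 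Indeed, if no bad node occurs at depths $0,2,\dots,2k-4$, then every variable node sending an incorrect message at those depths is good and hence forces two incorrect-sending children, so the number of such active variables doubles at every variable-level and is at least $2^{k-1}$ at depth $2k-2$. Each active variable at depth $2k-2$ accounts for at least two bad variables (two bad children at depth $2k$ if it is good, or itself together with one forced bad child at depth $2k$ if it is bad), giving $B(\mathcal{N}^{2k}_{e_3})\geq 2^{k}$. Since $2^{k}>2k$ for $k\geq 3$ (that is, $g\geq 14$), this already contradicts validity, so $B(\mathcal{N}^{2k-4}_{e_3})\geq 1$ in this case as well.

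Combining the two cases, the assumption that either of $v_1^1,v_1^2$ is good forces $B(\mathcal{N}^{2k}_{e_3}\cup\mathcal{N}^{2k}_{e_4})\geq 2k+1$, which is impossible for a valid configuration. Therefore both $v_1^1$ and $v_1^2$ are bad variable nodes.

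I expect the one genuinely new ingredient, and the main obstacle, to be the doubling estimate in the good-root case: it is the odd-case counterpart of the bound $B(\mathcal{N}^{2k}_e)\geq 2^{k}$ in Lemma~2, but driven by the weaker hypothesis $B(\mathcal{N}^{2k-4}_e)=0$ rather than $B(\mathcal{N}^{2k-2}_e)=0$, which is what the $2k-4$ offset in the intersection bound demands. The delicate step is verifying that a \emph{bad} variable at depth $2k-2$ that sends an incorrect message upward must still receive an incorrect message from a downward check, and hence seed a bad node at depth $2k$; this is what makes good and bad active depth-$(2k-2)$ nodes alike account for at least two bad variables each and pushes the count to $2^{k}$, so that the threshold $2^{k}>2k$ holds already at the smallest admissible value $k=3$. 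I would also check that $\mathcal{N}^{2k}_{e_3}$ is a tree at depth $2k=g/2-1$, so that the forced descendants are distinct, and confirm consistency with the unique $g=14$ configuration of Lemma~\ref{lemma2}.
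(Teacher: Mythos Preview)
Your proposal is correct and follows the paper's intended approach: the paper's own proof is simply ``Similar to the proof of Lemma~\ref{lemma1},'' and your contradiction argument via the union bound $B(\mathcal{N}^{2k}_{e_3}\cup\mathcal{N}^{2k}_{e_4})\geq B(\mathcal{N}^{2k-4}_{e_3})+B(\mathcal{N}^{2k}_{e_4})$ is exactly that adaptation. You also correctly isolate the one place where the analogy is not mechanical---the $2k{-}4$ offset forces you to rule out $B(\mathcal{N}^{2k-4}_{e_3})=0$, which is not a hypothesis Lemma~2 treats directly---and your doubling argument (at least $2^{k-1}$ active nodes at depth $2k-2$, each accounting for two bad variables) cleanly yields $B(\mathcal{N}^{2k}_{e_3})\geq 2^{k}>2k$ for $k\geq 3$, already a contradiction without even invoking $e_4$; your verification that a bad active node at depth $2k-2$ must still have a bad descendant at depth $2k$ is the right check here.
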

\begin{proof}
Similar to the proof of Lemma \ref{lemma1}.
\end{proof}
\begin{te}\label{theorem4}
If $\mathcal{C}_g$ is a valid configuration which does not converge in $k+1$ iterations, then there exists a valid configuration $\mathcal{C}_{g-4}$  which does not converge in $k$ iterations.
\end{te}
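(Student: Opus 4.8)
The plan is to transcribe the contraction argument from the proof of Theorem \ref{theorem1}, changing only the error-weight and girth bookkeeping to match the odd case ($k=(g-2)/4$, weight bound $g/2-1=2k$, girth $g=4k+2$). Since $\mathcal{C}_g$ does not converge in $k+1$ iterations, some variable node $v_0$ receives two incorrect messages at the end of that iteration, along $e_1=(c_1^1,v_0)$ and $e_2=(c_1^2,v_0)$, so that $v_2^1$ and $v_2^2$ send incorrect messages to $c_1^1$ and $c_1^2$ in the $(k+1)^{th}$ iteration. By Lemma \ref{lemma3} both $v_2^1$ and $v_2^2$ are bad. A bad node that emits an incorrect message along a given edge in iteration $k+1$ must, by the message-passing rules of Lemma 1(i), have received an incorrect message in iteration $k$ along a different edge (otherwise the unique correct message would be returned precisely along the edge in question). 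Hence there are checks $c_3^1\neq c_1^1$ and $c_3^2\neq c_1^2$, neighbors of $v_2^1$ and $v_2^2$ respectively, that send incorrect messages to $v_2^1$ and $v_2^2$ in the $k^{th}$ iteration; these are the nodes I would splice onto.

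I would then form $\mathcal{C}_{g-4}$ exactly as in Theorem \ref{theorem1} (the analogue of Fig. \ref{configg_1}): delete $v_2^1,v_2^2,c_1^1,c_1^2$ and every edge incident on them, and insert $(v_0,c_3^1)$ and $(v_0,c_3^2)$. Column weight three is preserved everywhere, since $v_0$ trades its two deleted edges for the two new ones (keeping its untouched third check) while $c_3^1$ and $c_3^2$ each recover the edge lost when $v_2^1,v_2^2$ were removed. Only the two bad nodes $v_2^1,v_2^2$ disappear, so the number of bad variables drops from at most $2k$ to at most $2k-2=2(k-1)$, which is exactly the admissible weight $(g-4)/2-1$ for girth $g-4=4(k-1)+2$.

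It then remains to verify the two defining properties of a valid non-converging configuration. For the girth, each inserted edge $(v_0,c_3^j)$ short-circuits the length-three path from $v_0$ to $c_3^j$ through $c_1^j$ and $v_2^j$, so any cycle of $\mathcal{C}_{g-4}$ that uses one (resp.\ both) new edge(s) lifts to a cycle of $\mathcal{C}_g$ longer by two (resp.\ four); since $\mathcal{C}_g$ has no cycle shorter than $g$, $\mathcal{C}_{g-4}$ has none shorter than $g-4$. For non-convergence, note that the surgery alters nothing in the subtree lying below $c_3^j$ on the side away from $v_0$, which is the only part that determines the message $c_3^j$ emits toward $v_0$ in iteration $k$; hence that message equals the incorrect message $c_3^j$ sent toward $v_2^j$ in $\mathcal{C}_g$, and $v_0$ again receives two incorrect messages after $k$ iterations and is mis-estimated by Lemma 1(iv). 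The argument is conceptually identical to Theorem \ref{theorem1}, so the only real obstacle is bookkeeping: checking that the odd-case weight bound $2k$ and the shifted intersection depth $i+j<4k-2$ neither admit a cycle of length below $g-4$ nor change which variable node plays the role of $v_0$. Because the entire splice is confined to the depth-three neighborhood of $v_0$, these parity and depth changes do not interact with it, and the construction carries over verbatim.
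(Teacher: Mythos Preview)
Your proposal is correct and follows exactly the approach the paper intends: the paper's own proof of Theorem~\ref{theorem4} is the single line ``Similar to the proof of Theorem~\ref{theorem1},'' and you have carried out precisely that transcription, invoking Lemma~\ref{lemma3} in place of Lemma~\ref{lemma1} and adjusting the weight bound from $2k+1$ to $2k$ and the girth from $4k+4$ to $4k+2$. If anything, your write-up is more explicit than the paper's---you justify why the surgery preserves column weight three, why the girth drops by at most four, and why the messages into $v_0$ at iteration $k$ in $\mathcal{C}_{g-4}$ coincide with the messages into $v_2^j$ at iteration $k$ in $\mathcal{C}_g$---whereas the paper simply asserts these facts in Theorem~\ref{theorem1}.
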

\begin{proof}
Similar to the proof of Theorem \ref{theorem1}.
\end{proof}
\begin{te}\label{theorem5}
For $k \geq 3$, there is only one valid configuration which does not converge in $k+1$ iterations.
\end{te}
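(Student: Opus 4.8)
The plan is to follow the proof of Theorem~\ref{theorem2} almost verbatim, replacing its even-girth ingredients by their odd-girth analogues. I would argue by induction on $k$ (equivalently on the girth $g=4k+2$), with base case $k=3$, i.e. $g=14$. The base case is exactly Lemma~\ref{lemma2}, which supplies a single valid configuration (Fig.~\ref{girth14}) of $2k=6$ bad variable nodes that fails to converge in $k+1=4$ iterations. The inductive claim then has two halves: there is \emph{at most} one such configuration for every $k\geq 3$, and there is \emph{at least} one.

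For the upper bound I would invoke Theorem~\ref{theorem4}: any valid configuration $\mathcal{C}_g$ that does not converge in $k+1$ iterations yields a valid configuration $\mathcal{C}_{g-4}$ that does not converge in $k$ iterations. As in the remark following Theorem~\ref{theorem1}, distinct configurations $\mathcal{C}^1_g$ and $\mathcal{C}^2_g$ produce distinct $\mathcal{C}^1_{g-4}$ and $\mathcal{C}^2_{g-4}$, so the map is injective and the number of valid configurations cannot increase as the girth grows. Chaining this down to the base case $g=14$ shows that for every $k\geq 3$ there is at most one valid configuration not converging in $k+1$ iterations.

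For the lower bound I would exhibit an explicit extension, mirroring the construction used in Theorem~\ref{theorem2} (Fig.~\ref{configg_g4}). Starting from the unique configuration for girth $g$, remove the edges $(v_0,c_1^1)$ and $(v_0,c_1^2)$, add two new bad variable nodes $v_{new}^1,v_{new}^2$ and two new check nodes $c_{new}^1,c_{new}^2$, and introduce the edges $(v_0,c_{new}^1)$, $(v_0,c_{new}^2)$, $(v_{new}^1,c_{new}^1)$, $(v_{new}^2,c_{new}^2)$, $(v_{new}^1,c_1^1)$ and $(v_{new}^2,c_1^2)$. This raises the bad-node count from $2k$ to $2(k+1)$ and, provided it introduces no cycle shorter than $g+4$, produces a valid configuration for girth $g+4$ in which $v_0$ again receives two incorrect messages at the end of iteration $k+2$. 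Hence there is at least one valid configuration for every $k\geq 3$, and combined with the upper bound, exactly one.

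The main obstacle is the extension step. One must verify that the enlarged graph has no cycle of length less than $g+4$ and that the decoder dynamics still deliver two incorrect messages to $v_0$ at the new iteration count, not merely one. This is more delicate in the odd case than in the even case because here $\mathcal{N}^{2k}_{e_3}$ and $\mathcal{N}^{2k}_{e_4}$ may already share a node at depth $2k-1$ (the bound $i+j<4k-2$), so the overlap must be tracked carefully to be sure the new nodes are genuinely distinct from the existing ones and that no good variable node is accidentally forced to receive two incorrect messages. Lemma~\ref{lemma3}, guaranteeing that $v_1^1$ and $v_1^2$ are bad, is what keeps the configuration forced and unique through each inductive step; once the extension is checked, the rest of the argument transcribes directly from the proof of Theorem~\ref{theorem2}.
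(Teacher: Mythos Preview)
Your proposal is correct and mirrors the paper's own proof, which is essentially a one-paragraph pointer: base case $k=3$ via Lemma~\ref{lemma2}, the upper bound via Theorem~\ref{theorem4} and the injectivity remark after Theorem~\ref{theorem1}, and the lower bound via the same node-insertion extension used in Theorem~\ref{theorem2}, yielding the family of configurations in Fig.~\ref{girth18}. Your discussion of the overlap at depth $2k-1$ is more careful than the paper's, but the approach is identical.
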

\begin{proof}
For $k=3$, we have $g=14$ and there is only one configuration. For $k =4$, the number of valid configurations cannot be more than one. The valid configuration for $g=14$, can  be extended to a configuration for $g=18$ (in the same manner as in Theorem \ref{theorem2}).  In general, the valid configuration for girth $g$ can be extended to a valid configuration for girth $g+4$. Fig. \ref{girth18} shows $\mathcal{C}_g$ for all $g \geq 14$.
\end{proof}
\begin{te}\label{theorem6}
The configuration $\mathcal{C}_g$ converges to a codeword in $g/2$ iterations.
\end{te}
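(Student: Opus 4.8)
The plan is to follow the proof of Theorem~\ref{theorem3}. By Theorem~\ref{theorem5}, for each $g \geq 14$ with $g/2$ odd there is exactly one valid configuration $\mathcal{C}_g$ that fails to converge in $k+1$ iterations (Fig.~\ref{girth18}), so it suffices to establish convergence for this single family. Its structure is a good central node $v_0$ together with two symmetric branches of $k$ bad variable nodes each, $v_2^i, v_4^i, \ldots, v_{2k}^i$ for $i = 1, 2$, whose deepest members $v_{2k}^1$ and $v_{2k}^2$ are joined by the check node $c^*$ at which the two directed neighborhoods $\mathcal{N}^{2k}_{e_3}$ and $\mathcal{N}^{2k}_{e_4}$ meet; the whole forms a single cycle of length exactly $g$ through $v_0$ and $c^*$. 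Note that $v_0$ must be good, since otherwise $\mathcal{C}_g$ would contain $2k+1 > g/2 - 1$ bad variable nodes. I would prove the base case $g = 14$ (Fig.~\ref{girth14}) by an explicit trace of $\omega_5$, $\omega_6$, $\omega_7$ as in Theorem~\ref{theorem3}, and then lift the result to all $g \geq 14$ via the extension construction of Theorems~\ref{theorem4} and~\ref{theorem5}.

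The first step is to record the message state at the end of the $(k+1)^{th}$ iteration. Here $v_0$ receives two incorrect messages, which is precisely why $\mathcal{C}_g$ does not converge at that stage, while each of $v_2^i, \ldots, v_{2k-2}^i$ receives exactly one incorrect message from its outer neighbor, and the deepest bad variables $v_{2k}^i$ already receive only correct messages, the correction having been relayed to them through $c^*$.

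The second step is to trace the variable-to-check messages through iterations $k+2, \ldots, g/2$ by repeated application of Lemma~1. The governing fact is that $v_0$, being a good node that receives two incorrect messages, by Lemma~1(ii) always returns correct messages along its two cycle edges and sends an incorrect message only along its third edge, which leaves the configuration harmlessly. The resulting correction peels the incorrect messages off one branch-layer at a time, from the deepest bad variables inward toward $v_0$; using the tree structure of the relevant directed neighborhoods one checks, at each iteration, that no variable node comes to receive two incorrect messages. The last incorrect messages entering $v_0$ vanish at iteration $g/2 = 2k+1$, so by Lemma~1(iv) every variable node is then estimated correctly and the decoder outputs the all-zero codeword.

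The main obstacle I anticipate is the behavior at the merge check node $c^*$, which is exactly what distinguishes the odd case from the even case. In the even case the two branches meet at a variable node, so the outward correction must pass through that extra node before returning, costing one more iteration and yielding convergence at $2k+2$; here the branches meet at a check node, so the outward correction reaching $c^*$ is reflected straight back inward, shortening the round trip and giving convergence at exactly $g/2 = 2k+1$. Verifying this reflection carefully, and confirming that the cleanup never creates a new variable node that receives two incorrect messages, is the crux. Once this is done, the $g = 14$ trace closes off at iteration $7$ exactly as the $g = 16$ trace closed at iteration $8$, and the extension argument of Theorems~\ref{theorem4} and~\ref{theorem5} completes the proof for all $g \geq 14$.
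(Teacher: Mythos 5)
Your proposal is correct and takes essentially the same approach as the paper: the paper's proof of Theorem~\ref{theorem6} is simply ``similar to the proof of Theorem~\ref{theorem3}'', i.e., an explicit iteration-by-iteration message trace on the unique configuration of Fig.~\ref{girth18} identified by Theorem~\ref{theorem5}, which is exactly what you outline. Your added observation about the reflection at the merge check node (it receives correct messages from both deepest bad variables in iteration $k+1$ and returns correct messages, after which the correction peels inward one layer per iteration) accurately accounts for convergence at $g/2=2k+1$ rather than $2k+2$ iterations, a detail the paper leaves implicit.
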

\begin{proof}
Similar to the proof of Theorem \ref{theorem3}.
\end{proof}
\subsubsection{$g=10$}
In this case, $k=2$ and there are three configurations which do not converge at the end of the third iteration. Fig. \ref{girth10} shows the three configurations. It can be shown that these configurations converge in five iterations.
\section{Discussion}\label{discussion}
In this paper, we have established a relation between the girth and error correction capability of column-weight-three LDPC codes. The result presented is the best possible bound as it is known that codes with girth $g \geq 10$ cannot be guaranteed to correct $g/2$ errors. While it is disappointing that the error correction capability grows only linearly in the girth for column-weight-three codes, the methodology of our proof can be applied to higher column-weight codes to (hopefully) obtain better results.
\begin{figure}
\begin{centering}
\subfigure[] % caption for subfigure a
{
    \label{girth14}

\includegraphics[width=0.8in]{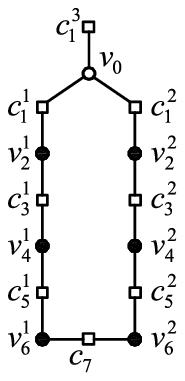}
}
\hspace{0.5in}
\subfigure[] % caption for subfigure a
{
    \label{girth18}

\includegraphics[width=0.8in]{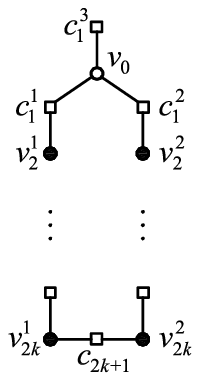}
}
\caption{\subref{girth14} Configuration of at most $6$ bad variable nodes free of cycles of length less than $14$ which does not converge in $4$ iterations \subref{girth18} Configuration of at most $2k$ bad variable nodes free of cycles of length less than $4k+2$ which does not converge in $k+1$ iterations.}
\label{genconfigs}
\end{centering}
\end{figure}

%\begin{figure}
%\begin{centering}
%\includegraphics[width=3in]{g12new_new.eps}
%\caption{Configurations of at most $5$ variable nodes free of cycles of length less than $12$ which do not converge in $3$ iterations}
%\label{girth12}
%\end{centering}
%\end{figure}

\begin{figure}
\begin{centering}
\subfigure[] % caption for subfigure a
{
    \label{girth10_1}

\includegraphics[bbllx=0,bblly=0,bburx=50, bbury=115]{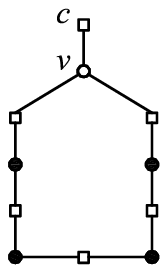}
}
\hspace{0.09in}
\subfigure[] % caption for subfigure a
{
    \label{girth10_2}

\includegraphics{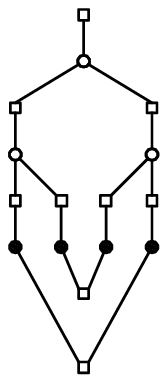}
}
%\hspace{-0.1in}
\subfigure[] % caption for subfigure a
{
    \label{girth10_3}

\includegraphics{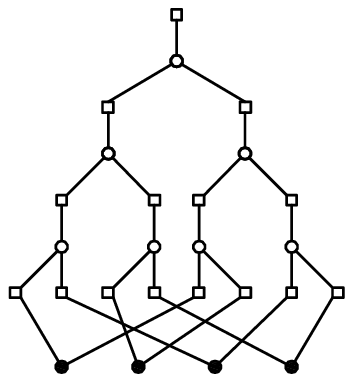}
}
\caption{Configurations of at most $4$ variable nodes free of cycles of length less than $10$ which do not converge in $3$ iterations.}
\label{girth10}
\end{centering}
\end{figure}

\end{document}